\pdfoutput=1
\ifdefined\TOMS
\documentclass[acmsmall,review,anonymous]{acmart}
\else
\documentclass[acmsmall,nonacm]{acmart}
\fi

\acmJournal{TOMS}

\usepackage{verbatim}
\usepackage{xspace}
\usepackage{amsmath}
\usepackage{url}
\usepackage{bm}
\usepackage{booktabs} %

\makeatletter
\newcommand{\ifanon}[2]{\if@ACM@anonymous#1\else#2\fi}
\makeatother

\def\..{\,\mathpunct{\ldotp\ldotp}} %
\makeatletter\let\tabinput\@@input\makeatother
\newcommand{\Z}{\mathbf Z}
\newcommand{\Q}{\mathbf Q}
\newcommand{\R}{\mathbf R}
\newcommand{\F}{\mathbf F}
\newcommand{\Zb}{\Z_b}

\newcommand{\mwc}[1][]{\texttt{MWC#1}\xspace}
\newcommand{\gmwc}[1][]{\texttt{GMWC#1}\xspace}
\newcommand{\xoshiropp}[1][]{\texttt{xoshiro#1++}\xspace}
\newcommand{\xoroshiropp}[1][]{\texttt{xoroshiro#1++}\xspace}
\DeclareMathOperator{\ord}{ord}

\begin{document}

\bibliographystyle{ACM-Reference-Format}
\acmJournal{TOMS}

\title{Scrambled Multiply-with-Carry Pseudorandom Number Generators}

\author{Sebastiano Vigna}
\orcid{0000-0002-3257-651X}
\affiliation{%
  \institution{Universit\`a degli Studi di Milano}
  \department{Dipartimento di Informatica}
  \city{Milano}
  \country{Italy}
}
\email{vigna@acm.org}

\begin{abstract}
Multiply-with-carry (MWC) generators are very fast pseudorandom number
generators whose quality depends on the choice of a multiplier, and is usually
assessed by applying the spectral test to an associated multiplicative
congruential generator. The special form of the modulus is shown to impose
structural limitations on the lattice structure of every MWC generator, which
the generalized MWC generators of Goresky and Klapper mitigate; in the tests
reported, a scrambler made of a single xor removes them. Since these
limitations make the usual worst-case criterion useless, candidates are ranked by a
weighted geometric mean of their figures of merit, which does not depend on
their normalization, and multipliers are suggested for generators with $64$ to $256$
bits of state. Along the way, some errors in the literature are corrected. In the
end, MWC192X is proposed: a $64$-bit generator with $192$ bits of state that
passes all known tests, and emits an output in half a nanosecond. Analogous
32-bit generators are proposed for GPUs.
\end{abstract}

\begin{CCSXML}
<ccs2012>
<concept>
<concept_id>10002950.10003648.10003670.10003687</concept_id>
<concept_desc>Mathematics of computing~Random number generation</concept_desc>
<concept_significance>500</concept_significance>
</concept>
</ccs2012>
\end{CCSXML}

\ccsdesc[500]{Mathematics of computing~Random number generation}

\keywords{Pseudorandom number generators, multiply-with-carry, spectral test}

\maketitle

\section{Introduction}

Congruential generators are among the oldest and best understood pseudorandom
number generators (PRNGs): their outputs have a lattice structure that can be
analyzed precisely with number-theoretic tools~\cite{KnuACPII}. This paper is
about a family of very fast generators whose analysis reduces to that of
congruential generators: generators with carry.

Marsaglia and Zaman~\cite{MaZNCRNG} introduced in 1991 \emph{add-with-carry}
(AWC) and \emph{subtract-with-borrow} (SWB) generators, which extend lagged
Fibonacci generators by propagating a carry (or borrow). Tezuka, L'Ecuyer, and
Couture~\cite{TLCLSACSBRNG} showed that they are, in disguise, multiplicative
congruential generators (MCGs) with a huge modulus, whose lattice structure is
very bad in dimensions larger than the lag (see also~\cite{CoLLSCLCSRASG}), a
defect that L\"uscher's RANLUX generator~\cite{LusPHQRNGLFTS} mitigates by
discarding most of the outputs.

In 1994, Marsaglia~\cite{MarYAR} proposed \emph{multiply-with-carry} (MWC)
generators, in which the sum of two lagged values is replaced by a linear
combination of lagged values. Couture and L'Ecuyer~\cite{CoLLRCURNG,CoLDPMCRNG}
showed that they, too, simulate an MCG: when only one coefficient $a$ is
nonzero, if $b$ is the base (typically, $2^{32}$ or $2^{64}$) and $k$ is the
lag, the modulus is $m=ab^k-1$ and the multiplier is $b^{-1}\bmod m$. They also
introduced \emph{complementary} MWC generators, with modulus $ab^k+1$, later
popularized by Marsaglia~\cite{MarRNG}. Independently, Klapper and
Goresky~\cite{KlGTASR,KlGFSRSCM} studied \emph{feedback with carry shift
registers} (FCSRs), the binary analogue of MWC generators, using $2$-adic
numbers. Goresky and Klapper~\cite{GoKEMCRNGMP} later merged the two points of
view, and proposed a \emph{generalized} MWC (GMWC) generator, in which the new
digit is multiplied by a coefficient $a_0$ coprime with the
base~\cite{KlXAFSR}, so that the modulus can have a much less constrained form.

On $64$-bit architectures with fast $64\times64\to128$-bit multiplications, MWC
generators are a very interesting choice: an MWC generator with $128$ bits of
state needs one full multiplication and one addition, and has period about
$2^{127}$; when the lag is larger than one, the multiplications of consecutive
outputs are independent, and modern processors execute them in parallel
(Section~\ref{sec:speed}). \ifanon{Since version 25.0, Erlang/OTP includes a $59$-bit MWC
generator~\cite{OTPrand}}{The author has been distributing MWC and GMWC
generators for a few years,\footnote{\url{https://prng.di.unimi.it/}} and since
version 25.0 Erlang/OTP includes a $59$-bit MWC generator whose multiplier was
selected by the author and Raimo Niskanen~\cite{OTPrand}}
(Section~\ref{sec:small}).

The quality of an MWC generator depends on the multiplier $a$, and the natural
tool to assess it is the \emph{spectral test}~\cite{CoMFAURNG,KnuACPII} of the
associated MCG, which computes, in each dimension $t$, the maximal distance
$d_t$ between successive hyperplanes over all families of equidistant parallel
hyperplanes covering the points produced by the generator. The distances are
normalized into \emph{figures of merit} $f_t\in(0\..1]$, which are summarized by
a single criterion, such as their minimum, the \emph{worst-case figure of merit}
$M_T$~\cite{FiMEAMCRNG,LEcTLCGDSGLS}, or the \emph{harmonic score} of Steele and
Vigna~\cite{StVCESGMCPNGjr}. The form of the modulus of an MWC generator,
however, forces a bad lattice structure: in dimensions up to $k+1$ the figures of
merit are fixed by the parameters, and in dimension $k+2$ they are, as we show,
smaller than $10^{-3}$ for $b=2^{64}$, so the worst-case figure of merit is
determined by a single, structurally bad dimension.

Our main result makes this limitation explicit: whatever the lag $k$ and the
multiplier $a<b$, the vectors of outputs $(x_n,x_{n+1},x_{n+k+1})/b$ of an MWC
generator lie, up to an error smaller than $1/b$ in each coordinate, in a family
of equidistant parallel planes at least about $1/(1.28\sqrt b)$ apart
(Section~\ref{sec:lacunary}; the argument is similar to that used by L'Ecuyer and
coauthors for multiple recursive generators with few nonzero
coefficients~\cite{LETDLRNGRM,LESLSSCMRRNG}). The resulting bounds on the figures of merit are
nearly attained by our best multipliers and, together with a generalization,
explain all the structural limitations we observe.
A three-dimensional
birthday-spacings test detects the defect in a few seconds for $b=2^{32}$, but
for $b=2^{64}$ our test uses $3$ to $5$\,TB of output, and needs an implementation
that can process them independently of the available memory
(Section~\ref{sec:detect}).

GMWC generators
with lag larger than one have a similar defect, with planes about $b^{-2/3}$
apart, whereas for GMWC generators with lag one we find no structural
limitation in the dimensions we consider. In any case, the
scrambler X, which generalizes to GMWC generators that of Thomas's MWC64X
generator (the exclusive or of each digit with the corresponding carry),
removes, in our tests, the defect and every other form of lattice structure we
could detect (Section~\ref{sec:x}).

Since the worst-case figure of merit is useless, we rank candidates by a
weighted geometric mean of their figures of merit: by a classical
characterization from social choice theory~\cite{dAGSWFIC}, this is essentially
the only ranking that depends neither on the normalization of the figures of
merit nor on the other candidates (Section~\ref{sec:criterion}). We report the
results of a search over more than $52$~million candidates for nine
combinations of base, lag, and generator type, and suggest multipliers.
Our final suggested generator is \mwc[192X]: in our benchmarks (Section~\ref{sec:speed}),
\mwc[192X] and \mwc[256X] are the fastest scalar generators known to \ifanon{us}{the author}
that have a known period and jump functions, and pass all the statistical tests
we are aware of (Section~\ref{sec:recommendations}). We suggest also $32$-bit
generators for GPUs.

We give a self-contained presentation of (G)MWC generators based on $b$-adic
numbers, following Goresky and Klapper~\cite{GoKEMCRNGMP}, with simple seeding
instructions for GMWC generators, and we correct three errors
in~\cite{GoKEMCRNGMP}: the formula for the carry is incorrect when $a_0\neq 1$,
three of the twelve proposed parameter sets do not have maximum period, and the
exponential representation of the output has the wrong sign (a related error
appears in~\cite{GoKASRS}). Finally, our title nods to the scrambled linear
generators of Blackman and Vigna~\cite{BlVSLPNG}: our generators, too, are made
of an engine and a scrambler, but the engine is based on congruential arithmetic
rather than on $\F_2$-linear maps, and, dually, our scrambler uses $\F_2$-based rather
than congruential arithmetic.

The examined candidate multipliers and their spectral data are available at
\anon{\url{https://vigna.di.unimi.it/MWC.tar.zst}}, the code of the search, of
the selection, and of our generators at
\anon{\url{https://github.com/vigna/MWCPRNG}}, and the harnesses and the speed
tests of Sections~\ref{sec:speed} and~\ref{sec:gpu} at
\anon{\url{https://github.com/vigna/prng-speed}}.

\section{Multiply-with-carry generators}
\label{sec:mwc}

We fix throughout the paper a \emph{base} $b\geq 2$ (in practice, $b=2^w$ with
$w=32$ or $w=64$), a \emph{lag} $k\geq 1$, a \emph{multiplier} $a>0$, and an
integer $a_0$ coprime with $b$; we let $A=a_0^{-1}\bmod b$. The generator is
defined by the recurrence
\begin{equation}
\label{eq:rec}
a_0x_n+bc_n=ax_{n-k}+c_{n-1},\qquad 0\leq x_n<b,\qquad n\geq k,
\end{equation}
where the initial state is given by \emph{digits} $x_0$, $x_1$, $\dots\,$,
$x_{k-1}\in[0\..b)$ and by a \emph{carry} $c_{k-1}\in\Z$. Since $a_0$ is
invertible modulo $b$, reading~(\ref{eq:rec}) modulo $b$ shows that $x_n$ is
uniquely determined as $x_n=A(ax_{n-k}+c_{n-1})\bmod b$, and then $c_n$ is
uniquely determined, too. The state of the generator at time $n\geq k$ is
$(x_{n-k},\dots,x_{n-1};c_{n-1})$, and the generator outputs the sequence of
digits $x_0$, $x_1$, $x_2$, \dots\,.\footnote{Implementations may return
the oldest or the newest digit of the state, but the output sequence is in
both cases the sequence of digits, possibly shifted by $k-1$ positions.}
The recurrence~(\ref{eq:rec}) is the special case of the general recurrence
studied by Goresky and Klapper~\cite{GoKEMCRNGMP} in which only the coefficients
of $x_{n-k}$ and $x_n$ are nonzero. The fundamental parameter associated with
the generator is the \emph{modulus}, which Goresky and Klapper call
\emph{connection integer}:
\[
m = ab^k-a_0.
\]
Note that $m\equiv -a_0\pmod b$, so $m$ is coprime with $b$. We will always
assume $m>1$.

When $a_0=1$, the recurrence~(\ref{eq:rec}) becomes
\[
x_n = (ax_{n-k}+c_{n-1})\bmod b,\qquad c_n = \left\lfloor
(ax_{n-k}+c_{n-1})/b\right\rfloor,
\]
which is an MWC generator~\cite{MarYAR,CoLDPMCRNG} with a single nonzero
coefficient (as in~\cite{MarRNG}) and modulus $m=ab^k-1$. When $b=2^w$ and
$a<b$, the computation of $ax_{n-k}+c_{n-1}$ requires a $w\times w\to 2w$-bit
multiplication and a $2w$-bit addition; $x_n$ and $c_n$ are then just the lower
and upper $w$ bits of the result.

Complementary MWC generators~\cite{CoLLRCURNG,CoLDPMCRNG,MarRNG} correspond
instead to the case $a_0=-1$: for example, Marsaglia's complementary generator
computes $x_n=(b-1)-((ax_{n-k}+c_{n-1})\bmod b)$ and
$c_n=\lfloor(ax_{n-k}+c_{n-1})/b\rfloor$, and a direct computation shows that
$-x_n+(c_n+1)b=ax_{n-k}+(c_{n-1}+1)$, that is, the recurrence~(\ref{eq:rec})
with $a_0=-1$ and carry $c_n+1$. Their modulus is $ab^k+1$.

In this paper, a GMWC generator will be a generator with $a_0=-e$, where $e>0$
(odd, as $b$ will be a power of two); complementary MWC generators are thus the
case $e=1$, and all our results about GMWC generators apply to them. In
particular, since $|a_0|=1$, complementary MWC generators have the same
structural limitations as MWC generators (Section~\ref{sec:lacunary}), and when
$b=2^w$ their modulus cannot be a safe prime, as $(m-1)/2$ is even. We will be
interested, instead, in large values of $e$. We use the name
GMWC for the family of Goresky and Klapper, who simply call these generators
MWC generators. In this case
\[
x_n = A(ax_{n-k}+c_{n-1})\bmod b,\qquad c_n = (ax_{n-k}+c_{n-1}+ex_n)/b,
\]
where the division is exact, and $m=ab^k+e$. The computation now requires an additional $w\times w\to w$-bit
multiplication (by $A$), an additional $w\times w\to 2w$-bit multiplication (by
$e$), and an additional $2w$-bit addition. Note that both multiplications lie on
the dependency chain of the carry (see Section~\ref{sec:speed}). We restrict
ourselves to negative $a_0$ because in that case all carries of strictly
periodic states are nonnegative (see Section~\ref{sec:seeding}), which makes it
possible to implement the generator using unsigned arithmetic.

The formula for the carry given by Goresky and Klapper~\cite{GoKEMCRNGMP} is
not correct. In their notation, the state is $(x_{-1},\dots,x_{-r};c)$, where
$x_{-1}$ is the most recent digit, and the next digit $x'_{-1}$ and carry $c'$
satisfy $a_0x'_{-1}+c'b=\tau$, where $\tau=\sum_{i=1}^r a_ix_{-i}+c$ (in our
case, this is~(\ref{eq:rec})). They then
state~\cite[Equations~(3) and~(4)]{GoKEMCRNGMP} that $x'_{-1}=A\tau\bmod b$ and
$c'=(\tau-a_0x'_{-1})/b=\tau\operatorname{div}b$. The first expression for $c'$
is the definition solved for $c'$, but the second one is false in general: since
$\tau\operatorname{div}b=(\tau-(\tau\bmod b))/b$, the two expressions differ by
$((\tau\bmod b)-a_0x'_{-1})/b$, which is zero if and only if $0\leq
a_0x'_{-1}<b$. This is always true when $a_0=1$, as in Marsaglia's MWC
generators, but if $a_0<0$, as in our GMWC generators, the correct carry is
strictly larger than $\tau\operatorname{div}b$ whenever
$x'_{-1}\neq0$.\footnote{The error has been confirmed by Mark Goresky (personal
communication, October 2020).} The same formula appears in the later
monograph~\cite[Chapter~4]{GoKASRS}, whereas the definition of algebraic
feedback shift registers of Klapper and Xu~\cite{KlXAFSR} computes the carry
correctly. The proofs in~\cite{GoKEMCRNGMP} use the correct expression (e.g.,
Equation~(9) in the proof of their Theorem~3.1): the error affects only
implementations following Equation~(4) literally.

We will name generators by their state size: for $b=2^{64}$, \mwc[128] has
$k=1$, \mwc[192] has $k=2$, and \mwc[256] has $k=3$, and similarly for GMWC
generators. Generators with $b=2^{32}$ will be denoted by a \texttt{/32} suffix
(e.g., \mwc[64/32] has $b=2^{32}$ and $k=1$). Figure~\ref{fig:code} shows C
implementations of \mwc[128], \gmwc[128], \mwc[192], and \gmwc[256] with the
scrambler X of Section~\ref{sec:x}, and Figure~\ref{fig:code32} of \mwc[64/32]
and \mwc[128/32].

\begin{figure}
\small
\begin{verbatim}
    #define MWC_A1 0xff6da236274e252e
    /* Seed: 0 < mwc_c < MWC_A1 - 1 (we suggest mwc_c = 1) */
    static uint64_t mwc_x, mwc_c;
    static inline uint64_t mwc128x_next(void) {
        const uint64_t result = mwc_x ^ mwc_c; /* Scrambler X */
        const __uint128_t t = MWC_A1 * (__uint128_t)mwc_x + mwc_c;
        mwc_x = t; mwc_c = t >> 64;
        return result;
    }
\end{verbatim}
\medskip
\begin{verbatim}
    #define GMWC_A1 0xfff0cfb4ed6366fa
    #define GMWC_MINUSA0 0x68935f57ULL    /* e */
    #define GMWC_A0INV 0x8316d3ad5de20399 /* (-e)^-1 mod 2^64 */
    /* Seed: GMWC_MINUSA0 <= gmwc_c <= GMWC_A1 (a sufficient condition,
       not an invariant; we suggest gmwc_c = 2 * GMWC_MINUSA0) */
    static uint64_t gmwc_x, gmwc_c;
    static inline uint64_t gmwc128x_next(void) {
        const __uint128_t t = GMWC_A1 * (__uint128_t)gmwc_x + gmwc_c;
        gmwc_x = GMWC_A0INV * (uint64_t)t;
        gmwc_c = (t + GMWC_MINUSA0 * (__uint128_t)gmwc_x) >> 64;
        return gmwc_x ^ (uint64_t)(t >> 64); /* Scrambler X */
    }
\end{verbatim}
\medskip
\begin{verbatim}
    #define MWC_A2 0xff4e24dd418f5433
    /* Seed: 0 < mwc_c < MWC_A2 - 1 (we suggest mwc_c = 1) */
    static uint64_t mwc_x, mwc_y, mwc_c;
    static inline uint64_t mwc192x_next(void) {
        const uint64_t result = mwc_y ^ mwc_c; /* Scrambler X */
        const __uint128_t t = MWC_A2 * (__uint128_t)mwc_x + mwc_c;
        mwc_x = mwc_y; mwc_y = t; mwc_c = t >> 64;
        return result;
    }
\end{verbatim}
\medskip
\begin{verbatim}
    #define GMWC_A3 0xff76e600f25ae046
    #define GMWC_MINUSA0 0x34dabaf938e45b /* e */
    #define GMWC_A0INV 0xac8f71df86d6542d /* (-e)^-1 mod 2^64 */
    /* Seed: GMWC_MINUSA0 <= gmwc_c <= GMWC_A3 (a sufficient condition,
       not an invariant; we suggest gmwc_c = 2 * GMWC_MINUSA0) */
    static uint64_t gmwc_x, gmwc_y, gmwc_z, gmwc_c;
    static inline uint64_t gmwc256x_next(void) {
        const __uint128_t t = GMWC_A3 * (__uint128_t)gmwc_x + gmwc_c;
        gmwc_x = gmwc_y; gmwc_y = gmwc_z; gmwc_z = GMWC_A0INV * (uint64_t)t;
        gmwc_c = (t + GMWC_MINUSA0 * (__uint128_t)gmwc_z) >> 64;
        return gmwc_z ^ (uint64_t)(t >> 64); /* Scrambler X */
    }
\end{verbatim}
\Description{C source code of the next-state functions of MWC128, GMWC128, MWC192, and GMWC256 with the scrambler X.}
\caption{\label{fig:code}The C code for \mwc[128], \gmwc[128], \mwc[192], and
\gmwc[256] with the scrambler X of Section~\ref{sec:x}; returning
\texttt{mwc\_x}, \texttt{gmwc\_x}, \texttt{mwc\_y}, and \texttt{gmwc\_z},
respectively, gives the plain generators. The type \texttt{\_\_uint128\_t} is a GCC/Clang
extension; other compilers provide equivalent intrinsics (e.g.,
\texttt{\_umul128}). Seeding is discussed in Section~\ref{sec:seeding}.}
\end{figure}

\begin{figure}
\small
\begin{verbatim}
    #define MWC_A1 0xfff145f8
    /* Seed: 0 < mwc_c < MWC_A1 - 1 (we suggest mwc_c = 1) */
    static uint32_t mwc_x, mwc_c;
    static inline uint32_t mwc64x32_next(void) {
        const uint32_t result = mwc_x ^ mwc_c; /* Scrambler X */
        const uint64_t t = MWC_A1 * (uint64_t)mwc_x + mwc_c;
        mwc_x = t; mwc_c = t >> 32;
        return result;
    }
\end{verbatim}
\medskip
\begin{verbatim}
    #define MWC_A3 0xffcccc66
    /* Seed: 0 < mwc_c < MWC_A3 - 1 (we suggest mwc_c = 1) */
    static uint32_t mwc_x, mwc_y, mwc_z, mwc_c;
    static inline uint32_t mwc128x32_next(void) {
        const uint32_t result = mwc_z ^ mwc_c; /* Scrambler X */
        const uint64_t t = MWC_A3 * (uint64_t)mwc_x + mwc_c;
        mwc_x = mwc_y; mwc_y = mwc_z; mwc_z = t; mwc_c = t >> 32;
        return result;
    }
\end{verbatim}
\Description{C source code of the next-state functions of MWC64/32 and MWC128/32 with the scrambler X.}
\caption{\label{fig:code32}The C code for \mwc[64/32] and \mwc[128/32] with the
scrambler X of Section~\ref{sec:x}; returning \texttt{mwc\_x} and
\texttt{mwc\_z}, respectively, gives the plain generators.}
\end{figure}

\section{The arithmetic of (G)MWC generators}
\label{sec:arith}

We now recall, with complete proofs, the basic properties of (G)MWC
generators, following the $b$-adic approach of Goresky and
Klapper~\cite{GoKEMCRNGMP}, which gives very short proofs.

\subsection{\texorpdfstring{$b$}{b}-adic numbers}

We recall a few basic facts about $b$-adic numbers; we refer the reader to the
monograph by Goresky and Klapper~\cite{GoKASRS}, where they are called
$N$-adic numbers, for proofs and details. A $b$-adic integer is a formal series
$\alpha=\sum_{i\geq0}x_ib^i$ with $0\leq x_i<b$; we call $x_0$, $x_1$, \dots\
its \emph{digits}. The $b$-adic integers form a ring $\Zb$, in which sums and
products are computed propagating carries, exactly like in base-$b$ arithmetic,
except that numbers are infinite to the left. 

The ring $\Zb$ contains $\Z$ (e.g., $-1=\sum_{i\geq0}(b-1)b^i$),
and, more generally, all rationals whose denominator is coprime with $b$; the
map from such rationals to $\Zb$ is injective. A series of $b$-adic integers
$\sum_n \beta_n$ converges if the number of trailing zero digits of $\beta_n$
goes to infinity; in particular, $\sum_{n}c_nb^n$ converges for every sequence
of integers $c_n$. When $b=2^w$, $\Zb$ is the ring $\Z_2$ of $2$-adic
integers, and the base-$b$ digits of $\alpha\in\Z_2$ are its bits grouped in
blocks of $w$.

The digits of a rational number in $\Zb$ are eventually periodic, and the
fundamental fact we will use is the following characterization of
\emph{strictly} periodic expansions, which is the $b$-adic analogue
of~\cite[Theorem~2.1]{KlGFSRSCM}:

\begin{lemma}
\label{lemma:periodic}
Let $m>1$ be coprime with $b$ and $h\in\Z$. Then, the $b$-adic expansion of
$-h/m$ is strictly periodic if and only if $0\leq h\leq m$.
\end{lemma}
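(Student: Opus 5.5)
We prove both directions by relating strict periodicity to the identity satisfied by a purely periodic expansion. The key fact is this: if $\alpha=\sum_{i\geq0}x_ib^i$ has period $p$, meaning $x_{i+p}=x_i$ for all $i\geq0$, then $\alpha=X+b^p\alpha$, where $X=\sum_{i<p}x_ib^i$ is an integer with $0\leq X\leq b^p-1$. Hence $\alpha=-X/(b^p-1)$, a number of the form $-h'/m'$ with $0\leq h'\leq m'$. Conversely, any such $-X/(b^p-1)$ with $0\leq X\leq b^p-1$ expands as $X$ repeated, since $-1/(b^p-1)=\sum_j b^{jp}$ in $\Zb$. So strictly periodic expansions are exactly the numbers $-X/(b^p-1)$ with $0\leq X\leq b^p-1$ and $p\geq1$. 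The lemma then reduces to rewriting $-h/m$ in this form.

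For the ``if'' direction, assume $0\leq h\leq m$. Since $m$ is coprime with $b$, $b$ is invertible modulo $m$, so there is $p\geq1$ with $b^p\equiv1\pmod m$ (the order of $b$ modulo $m$, using $m>1$). Write $b^p-1=qm$ with $q\geq1$ and set $X=hq$. Then $-h/m=-X/(b^p-1)$ and $0\leq X\leq qm=b^p-1$. By the key fact, the expansion is the base-$b$ representation of $X$, padded to $p$ digits and repeated, which is strictly periodic. The boundary cases are consistent: $h=0$ gives all zero digits, and $h=m$ gives $-1$, whose digits are all $b-1$.

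For the ``only if'' direction, assume the expansion of $-h/m$ is strictly periodic with some period $p$. By the key fact, $-h/m=-X/(b^p-1)$ with $0\leq X\leq b^p-1$. Since rationals with denominator coprime to $b$ embed injectively in $\Zb$, this is an equality of rationals, $h(b^p-1)=Xm$. Because $m>0$ and $b^p-1>0$, we get $h=Xm/(b^p-1)$, and therefore $0\leq h\leq m$.

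The only delicate step is justifying the identity $\alpha=X+b^p\alpha$ and the geometric-series expansion in $\Zb$. Both follow from the stated convergence criterion for series $\sum c_nb^n$ and from carry-free digit concatenation: adding $X<b^p$ to $b^p\alpha$ produces no carries. The rest is elementary.
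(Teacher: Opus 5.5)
Your proof is correct and follows essentially the same route as the paper. For the ``if'' direction you write $b^p-1=qm$ with $p$ the order of $b$ modulo $m$ and expand $-h/m$ as the $p$-digit representation of $hq$ repeated; for the ``only if'' direction you read off $h/m=X/(b^p-1)\in[0,1]$ from the first $p$ digits, making explicit the injectivity step that the paper leaves implicit.
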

\begin{proof}
Let $P=\ord_m(b)$ and $b^P-1=qm$. If $0\leq h\leq m$, then
$-h/m=hq/(1-b^P)=hq\sum_{j\geq0}b^{jP}$; since $0\leq hq\leq b^P-1$, the
expansion of $-h/m$ is the $P$-digit base-$b$ representation of $hq$ repeated
forever. Conversely, if the expansion is strictly periodic with some period $p$,
and $D\in[0\..b^p)$ is the integer represented by the first $p$ digits, then
$-h/m=D/(1-b^p)$, that is, $h/m=D/(b^p-1)\in[0\..1]$.
\end{proof}

The proof shows also that the $b$-adic digits of $-h/m$ are the digits of the
ordinary base-$b$ fractional expansion of $h/m\in[0\..1)$ in reverse order,
which is why the classical analyses~\cite{TLCLSACSBRNG,CoLDPMCRNG}, based on
fractional expansions, describe carry generators as producing the digits of a
rational number ``in reverse order'' (see also~\cite{MarRNG}).

\subsection{The state integer}

The fundamental observation is that the output of a (G)MWC generator is the
$b$-adic expansion of a rational number with denominator $m$, whose numerator
can be computed from the state:

\begin{lemma}[\cite{GoKEMCRNGMP}, Lemma~6.1]
\label{lemma:state}
Let $(x_0,\dots,x_{k-1};c_{k-1})$ be a state of the generator, let
$X=\sum_{j=0}^{k-1}x_jb^j$ be the integer whose base-$b$ digits are $x_{k-1}$,
\dots, $x_0$, and let
\begin{equation}
\label{eq:h}
h=c_{k-1}b^k+a_0X.
\end{equation}
Then, $\sum_{i\geq0}x_ib^i=-h/m$ in $\Zb$.
\end{lemma}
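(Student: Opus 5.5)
We prove the identity by working directly in $\Zb$ with the formal series $\alpha=\sum_{i\geq0}x_ib^i$ and the carry series $\gamma=\sum_{n\geq k-1}c_nb^{n}$. Both converge, since every series $\sum_n c_nb^n$ with integer coefficients converges. The plan is to multiply the recurrence~(\ref{eq:rec}) for index $n$ by $b^n$, sum over all $n\geq k$, and solve the resulting linear equation in $\Zb$ for $\alpha$. Manipulating these sums termwise (shifting indices, splitting off finitely many terms) is legitimate because all series involved converge in the $b$-adic sense.

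Concretely, summing $a_0x_nb^n+c_nb^{n+1}=ax_{n-k}b^n+c_{n-1}b^n$ over $n\geq k$ gives
\[
a_0\Bigl(\alpha-X\Bigr)+b\bigl(\gamma-c_{k-1}b^{k-1}\bigr)=ab^k\alpha+\gamma,
\]
where I used $\sum_{n\geq k}x_nb^n=\alpha-X$, $\sum_{n\geq k}x_{n-k}b^n=b^k\alpha$, $\sum_{n\geq k}c_nb^{n+1}=b\gamma-c_{k-1}b^k$, and $\sum_{n\geq k}c_{n-1}b^n=b\gamma$. On closer inspection the two carry sums are both $b\gamma$ up to the boundary term, so the correct bookkeeping is to write $\sum_{n\geq k}c_nb^{n+1}=\sum_{n\geq k+1}c_{n-1}b^n$. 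Then the carry series cancel except for the single term $c_{k-1}b^k$ coming from $n=k$ on the right-hand side. This leaves
\[
a_0\alpha-a_0X=ab^k\alpha+c_{k-1}b^k .
\]

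Rearranging gives $(ab^k-a_0)\alpha=-(c_{k-1}b^k+a_0X)$, that is, $m\alpha=-h$. Since $m$ is coprime with $b$, it is invertible in $\Zb$, so $\alpha=-h/m$, which is the claim.

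The only delicate point is the index bookkeeping in the telescoping of the carries, since one boundary term must survive. I would double-check it by verifying the case $k=1$ by hand, for instance by comparing the first digit of $m\alpha$ with that of $-h$ modulo $b$.
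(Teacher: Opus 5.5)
Your proof is correct and follows the paper's argument: multiply the recurrence by $b^n$, sum over $n\geq k$, cancel the shifted carry series so that only $c_{k-1}b^k$ survives, and obtain $-m\alpha=h$ with $m$ invertible in $\Zb$. The only blemish is your first displayed equation, whose right-hand side should be $ab^k\alpha+b\gamma$ rather than $ab^k\alpha+\gamma$ (your own identity $\sum_{n\geq k}c_{n-1}b^n=b\gamma$ gives this); the bookkeeping you then redo yields the correct equation $a_0\alpha-a_0X=ab^k\alpha+c_{k-1}b^k$.
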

\begin{proof}
Let $\alpha=\sum_{i\geq0}x_ib^i$. Multiplying~(\ref{eq:rec}) by $b^n$ and
summing over $n\geq k$ we obtain
\[
a_0(\alpha-X)+\sum_{n\geq k}c_nb^{n+1} = ab^k\alpha+c_{k-1}b^k+\sum_{n\geq
k}c_nb^{n+1},
\]
whence $(a_0-ab^k)\alpha=a_0X+c_{k-1}b^k$, that is, $-m\alpha=h$.
\end{proof}

We call $h$ the \emph{state integer}. Note that for an MWC generator ($a_0=1$)
the state integer is simply the number with base-$b$ digits $c_{k-1}$,
$x_{k-1}$, \dots, $x_0$, that is, the state read as a $(k+1)$-digit number with
the carry as most significant digit. For a GMWC generator, $h=c_{k-1}b^k-eX$.
As noted by Goresky and Klapper, the state integer determines the state:
reading~(\ref{eq:h}) modulo $b$, $b^2$, \dots, $b^k$ we recover successively
$x_0$, $x_1$, \dots, $x_{k-1}$, and then $c_{k-1}$.

\subsection{The associated MCG}

We say that a state is \emph{strictly periodic} if the generator returns to it,
and that it is \emph{degenerate} if the output is constant. Since the state
integer determines the state, and the map from rationals to $\Zb$ is injective,
the generator returns to a state if and only if the output sequence from that
state is strictly periodic; thus, by Lemmas~\ref{lemma:periodic}
and~\ref{lemma:state}, a state is strictly periodic if and only if $0\leq h\leq
m$. The all-zero state ($h=0$) is degenerate. The state with all digits equal to
$b-1$ and carry $a-a_0$ (the value of the carry is computed from~(\ref{eq:rec})
with $x_n=x_{n-k}=b-1$) has $h=m$, because $-1=\sum_{i\geq0}(b-1)b^i$, and is
degenerate, too.

\begin{theorem}
\label{teo:mcg}
Let $B=b^{-1}\bmod m$, consider a generator started from a state with
$0<h_0<m$, and for $i\geq0$ let $h_i$ be the state integer of the state
$(x_i,\dots,x_{i+k-1};c_{i+k-1})$. Then, for all $i\geq0$:
\begin{enumerate}
\item $0<h_i<m$ and $h_{i+1}=Bh_i\bmod m$;
\item $x_i=Ah_i\bmod b$;
\item $x_i=\lfloor bh_{i+1}/m\rfloor$.
\end{enumerate}
\end{theorem}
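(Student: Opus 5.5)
The whole argument runs in $\Zb$, using Lemma~\ref{lemma:state} and Lemma~\ref{lemma:periodic}. Let $\alpha_i=\sum_{j\geq0}x_{i+j}b^j$ be the $b$-adic integer of the output sequence started at time $i$. The state $(x_i,\dots,x_{i+k-1};c_{i+k-1})$ produces the digits $x_i,x_{i+1},\dots$, so Lemma~\ref{lemma:state} gives $\alpha_i=-h_i/m$ for every $i$.

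\textbf{Step 1: the recurrence.} Since $\alpha_i=x_i+b\alpha_{i+1}$, we get $-h_i/m=x_i-bh_{i+1}/m$, and multiplying by $m$ yields the integer identity
\[
bh_{i+1}=h_i+mx_i .
\]
Reducing modulo $m$ gives $bh_{i+1}\equiv h_i$, that is, $h_{i+1}\equiv Bh_i\pmod m$.

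\textbf{Step 2: the range $0<h_i<m$.} I argue by induction on $i$, with the base case $0<h_0<m$ given by hypothesis. Assume $0<h_i<m$. Then $\alpha_i=-h_i/m$ is strictly periodic by Lemma~\ref{lemma:periodic}. Hence the tail $\alpha_{i+1}$ is also strictly periodic, and by Lemma~\ref{lemma:periodic} again $0\leq h_{i+1}\leq m$.

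It remains to exclude the endpoints. If $h_{i+1}=0$ or $h_{i+1}=m$, the identity $bh_{i+1}=h_i+mx_i$ gives $h_i\equiv 0\pmod m$. This contradicts $0<h_i<m$. So $0<h_{i+1}<m$, and then $h_{i+1}=Bh_i\bmod m$ exactly. This proves item~1.

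\textbf{Step 3: the two digit formulas.}
\begin{itemize}
\item For item~2, reduce~(\ref{eq:h}) modulo $b$: this gives $h_i\equiv a_0x_i\pmod b$, and multiplying by $A$ gives $x_i=Ah_i\bmod b$, since $0\leq x_i<b$.
\item For item~3, rewrite the identity as $bh_{i+1}/m=x_i+h_i/m$. Since $0<h_i/m<1$, taking floors gives $x_i=\lfloor bh_{i+1}/m\rfloor$.
\end{itemize}

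\textbf{Main obstacle.} The only delicate point is the range argument in Step~2. Strict periodicity of the tail requires care: a strictly periodic sequence of digits remains strictly periodic after dropping its first digit, because the shift of a purely periodic sequence is purely periodic. Once that is noted, the exclusion of the endpoints $0$ and $m$ through the identity $bh_{i+1}=h_i+mx_i$ is immediate.
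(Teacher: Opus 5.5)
Your proof is correct and follows the paper's argument. Both proofs derive the same integer identity $bh_{i+1}=h_i+mx_i$ from Lemma~\ref{lemma:state}, use strict periodicity of the next state together with Lemma~\ref{lemma:periodic} to get $0\leq h_{i+1}\leq m$, exclude the endpoints, and finish with the same floor computation; the only cosmetic differences are that you exclude the endpoints via the congruence $h_i\equiv0\pmod m$ rather than the paper's inequalities, and you obtain item~2 by reducing the definition~(\ref{eq:h}) modulo $b$ directly instead of reducing the identity using $m\equiv-a_0\pmod b$.
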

\begin{proof}
It is sufficient to prove the statements for $i=0$. Since the next state
outputs $x_1$, $x_2$, \dots, by Lemma~\ref{lemma:state} we have
$-h_0/m=x_0+b(-h_1/m)$, that is,
\begin{equation}
\label{eq:step}
bh_1=h_0+mx_0,
\end{equation}
an equation among integers. Reading it modulo $m$ gives $h_1\equiv Bh_0\pmod
m$; reading it modulo $b$, and using $m\equiv -a_0\pmod b$, gives $a_0x_0\equiv
h_0 \pmod b$, that is, $x_0=Ah_0\bmod b$. The initial state is strictly
periodic, and so is the next state, so $0\leq h_1\leq m$; if $h_1=0$,
then~(\ref{eq:step}) implies $h_0\leq 0$, and if $h_1=m$ then~(\ref{eq:step})
implies $h_0=m(b-x_0)\geq m$, so $0<h_1<m$ and $h_1=Bh_0\bmod m$. Finally,
$bh_1/m-x_0=h_0/m\in[0\..1)$, so $x_0=\lfloor bh_1/m\rfloor$.
\end{proof}

Thus, the state integers are generated by an MCG with modulus $m$ and
multiplier $B=b^{-1}\bmod m$, the analogue of the \emph{LCG representation} of
AWC and SWB generators~\cite{TLCLSACSBRNG}; Theorem~\ref{teo:mcg} corresponds
to Theorem~1 of~\cite{CoLDPMCRNG} in the MWC case, and to Theorems~2.2 and~4.1
of~\cite{GoKEMCRNGMP} in general (the sign of $A$
in~\cite[Theorem~2.2]{GoKEMCRNGMP} is incorrect, see
Section~\ref{sec:erratum}). For MWC generators $A=1$, and when $k=1$ we have
$ab\equiv 1\pmod m$, so $B=a$: the state integers of a lag-one MWC generator are
generated by the MCG with modulus $ab-1$ and multiplier $a$~\cite{MarRNG}.

\subsection{Connections with algebraic recurrences}

The output of a linear feedback shift register (LFSR) over $\F_2$ with
connection polynomial $Q(x)$ is the power-series expansion of a rational
function $P(x)/Q(x)$, whereas the output of a (G)MWC generator is the $b$-adic
expansion of a rational number $-h/m$. When $b=2^w$, since $\Zb=\Z_2$, a (G)MWC
generator with modulus $m$ computes, $w$ bits at a time, the bit sequence of a
binary FCSR with connection integer $m$~\cite{KlGTASR,KlGFSRSCM}, in the same
way as word-oriented linear generators compute many steps of an LFSR at once.
Algebraic feedback shift registers~\cite{KlXAFSR} provide a common
generalization (see~\cite{GoKASRS} for a comprehensive treatment), and the
second statement of Theorem~\ref{teo:mcg} is the analogue of the trace
representation of $m$-sequences, called \emph{exponential
representation}~\cite{KlGFSRSCM,KlXAFSR}.

\subsection{Period}

By Theorem~\ref{teo:mcg}, if $m$ is prime the period of the output sequence of
a generator started from a strictly periodic nondegenerate state is
$\ord_m(b)$. The period is $m-1$, the maximum possible, if and only if $b$ is a
primitive root modulo $m$. As observed by Couture and
L'Ecuyer~\cite[Section~3]{CoLDPMCRNG} (see also~\cite{GoKEMCRNGMP}), when
$a_0=1$ and $b=2^w$ with $w\geq3$ this never happens: indeed, $m\equiv -1
\pmod 8$, so $2$ is a quadratic residue modulo $m$ (and so is $b$). Moreover,
if $w$ is even $b$ is a square, so for $b=2^{32}$ or $b=2^{64}$ the period of
any (G)MWC generator is at most $(m-1)/2$.

This bound is easily attained: if $m$ is a \emph{safe prime}, that is, both $m$
and $(m-1)/2$ are prime, then $\ord_m(b)$ is a divisor of $m-1$; since $b$ is a
quadratic residue, $\ord_m(b)$ divides the prime $(m-1)/2$, and since $1<b<m$
we conclude that $\ord_m(b)=(m-1)/2$. In this case, the nonzero residues modulo
$m$ are split into two orbits of size $(m-1)/2$ (the quadratic residues and the
quadratic nonresidues), and every nondegenerate strictly periodic state has
period $(m-1)/2$. Safe primes were already used for this purpose by Marsaglia,
as reported by Couture and L'Ecuyer~\cite[p.~605]{CoLDPMCRNG}. All the (G)MWC
generators of our search have a safe-prime modulus.

If $m>5$ is a safe prime, then $(m-1)/2$ is odd, so $m\equiv 3\pmod 4$ and $-1$
is a quadratic nonresidue: the map $h\mapsto m-h$ exchanges the two orbits, and
since $-(m-h)/m=-1-(-h/m)$ and $-1=\sum_{i\geq0}(b-1)b^i$, the outputs of the
two orbits are digit-wise complements of each other. In every dimension, thus,
the points of the two orbits lie in the same families of parallel hyperplanes,
and, as is customary when the generator has several
cycles~\cite[Section~5.3]{TLCLSACSBRNG}, we will study the lattice structure of
the points produced by all cycles together.\footnote{Over a single orbit of an
MWC generator, the excess of zeroes over ones in each output bit is the class
number of $\Q(\sqrt{-m})$~\cite[Section~3]{CoLDPMCRNG}, a negligible quantity
for our moduli.}

\subsection{Errata}
\label{sec:erratum}

Besides the formula for the carry discussed in Section~\ref{sec:mwc},
\cite{GoKEMCRNGMP} contains two other errors. The first one is a sign error in
the exponential representation of the output. Theorem~2.2
of~\cite{GoKEMCRNGMP} states that the output of a strictly periodic state with
state integer $h$ (defined as in Lemma~\ref{lemma:state}) is
\[
x_{-r+i}=-A\bigl(hB^i\bmod m\bigr)\bmod b,
\]
where $A=a_0^{-1}\bmod b$ and $B=b^{-1}\bmod m$, but by
Theorem~\ref{teo:mcg} the correct sign is positive. The problem is in the
proof: reading the identity $h=bh'-mx_{-r}$ modulo $b$, Goresky and Klapper use
$m\equiv a_0\pmod b$, whereas their definition of the connection integer,
$m=-a_0+\sum_{i=1}^ra_ib^i$, gives $m\equiv-a_0\pmod b$. The error is visible
already for MWC generators: for $b=10$, $a_0=1$, $a=2$, and $k=1$ we have $m=19$,
and the state $(1;0)$ has state integer $h=1$ and output $1$, $2$, $4$, $8$,
$6$, \dots, whereas the formula gives $9$ as first output. The same sign error
occurs in the proof of Theorem~4.1 of~\cite{GoKEMCRNGMP} (when reading their
Equation~(10) modulo $b$), so the map $\varphi(h)=-Ah\bmod b$ used in that
theorem, and defined in~\cite[Section~2]{GoKEMCRNGMP}, should be
$\varphi(h)=Ah\bmod b$.

In the later monograph~\cite[Chapter~4]{GoKASRS} the output is written as the
expansion of $f/m$, where $f=-h$, but the representation is stated as
$A(B^if\bmod m)\bmod b$, with the reduction modulo $m$ in $[0\..m)$: since
$Am\equiv-1\pmod b$, this formula yields the digit-wise complement $b-1-x_i$ of
the actual output. Both incorrect formulas replace each output digit $x_i$ by a fixed
bijection of it ($-x_i\bmod b$ in~\cite{GoKEMCRNGMP}, $b-1-x_i$
in~\cite{GoKASRS}), which preserves the period and the distribution of tuples of
outputs: thus, the other results of~\cite{GoKEMCRNGMP} are not affected.

The second error concerns the table of parameters. Goresky and
Klapper~\cite[Section~7]{GoKEMCRNGMP} correctly observe that $b=2^w$
is a primitive root modulo a prime $m$ if and only if $2$ is a primitive root
modulo $m$ and $w$ is coprime with $m-1$, and then propose a table of twelve
parameter sets, with $w\in\{21,23,24,25,31,32,33,35\}$, claimed to yield period
$m-1$. However, three of the parameter sets have an even $w$, which cannot be
coprime with $m-1$. We verified using Sage that for all twelve parameter sets
$m$ is a (probable) prime and $2$ is a primitive root modulo $m$. Thus, the nine parameter sets with odd $w$ are correct, but for the
two parameter sets with $w=24$ the period is $(m-1)/2$, and for the parameter
set with $w=32$ (where $\gcd(32,m-1)=4$) the period is $(m-1)/4$ (see
Table~\ref{tab:erratum}). Mark Goresky confirmed that the three parameter sets
should not have been included.\footnote{Personal communication, October 2020.}
The claims about the distribution of tuples of outputs
in~\cite[Theorem~2.4]{GoKEMCRNGMP} do not apply to these three parameter sets.

\begin{table}
\caption{\label{tab:erratum}The parameter sets of~\cite[Section~7]{GoKEMCRNGMP}
that do not have maximum period. The values of $p$ and $m$ are as
in~\cite{GoKEMCRNGMP}.}
\renewcommand{\arraystretch}{1.1}
\begin{tabular}{lllcc}
\multicolumn{1}{c}{$b$} & \multicolumn{1}{c}{$p$} & \multicolumn{1}{c}{$m$} & Claimed period & Actual period\\
\hline
$2^{24}$ & $b^{48}-b^{46}-b^{38}-b^{14}+1$ & $2p+1$ & $m-1$ & $(m-1)/2$\\
$2^{24}$ & $b^{41}-b^{38}-2b^{14}+1$ & $2p+1$ & $m-1$ & $(m-1)/2$\\
$2^{32}$ & $b^{33}-b^{20}-b^{14}-b^{11}-b^4+1$ & $4p+1$ & $m-1$ & $(m-1)/4$\\
\end{tabular}
\end{table}

\subsection{Seeding}
\label{sec:seeding}

A user must be able to seed a generator easily, avoiding degenerate states and
states that are not strictly periodic. For an MWC generator this is simple: since the state integer is the state
read as a number, $0\leq h\leq m=ab^k-1$ if and only if $0\leq c_{k-1}\leq
a-1$, and, if $m$ is a prime larger than $b$ (as in all our generators), the
only degenerate states in this range are the all-zero state and the state with
all digits equal to $b-1$ and carry $a-1$.\footnote{A degenerate state with
constant output $x$ has $-h/m=x/(1-b)$, that is, $h(b-1)=xm$; if $m$ is coprime
with $b-1$ this implies $m\mid h$, so $h=0$ or $h=m$. Without this hypothesis
there may be other degenerate states.} Thus, it is sufficient to choose the
digits arbitrarily and the carry so that $0\leq c_{k-1}\leq a-2$, excluding the
all-zero state (this condition is slightly conservative, as it excludes all
states with carry $a-1$). These facts are classical: the range of the carry of
strictly periodic states is given by Couture and
L'Ecuyer~\cite[Corollary~3]{CoLDPMCRNG}.

For GMWC generators the situation is more complex, as the state integer is not
the state read as a number. The degenerate states are again the all-zero state
and the state with all digits equal to $b-1$ and carry $a+e$, but states lying
``between'' these two states (as numbers) are not necessarily strictly periodic:
for example, if $c_{k-1}=0$ and $x_0\neq 0$ then $h<0$. Theorem~2.1
of~\cite{GoKEMCRNGMP} characterizes strictly periodic states in terms of the
state integer (as we did above), but checking whether the state integer of
a state is in $[0\..m]$ is cumbersome. The following sufficient condition is
much more practical:

\begin{proposition}
\label{prop:seed}
Consider a GMWC generator with $a_0=-e<0$. Then, every state whose carry
satisfies $e\leq c_{k-1}\leq a$ has state integer $0<h<m$.
In particular, if $m$ is prime, all such states are strictly periodic,
nondegenerate, and have period $\ord_m(b)$.
\end{proposition}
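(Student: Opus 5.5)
The plan is to bound the state integer directly from its definition~(\ref{eq:h}), which for a GMWC generator reads $h=c_{k-1}b^k-eX$, where $X=\sum_{j=0}^{k-1}x_jb^j$. Since every digit satisfies $0\leq x_j\leq b-1$, we have $0\leq X\leq b^k-1$. This gives, for any state,
\[
c_{k-1}b^k-e(b^k-1)\leq h\leq c_{k-1}b^k.
\]
We then plug in the two ends of the carry range $e\leq c_{k-1}\leq a$.

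For the lower bound, $c_{k-1}\geq e$ gives
\[
h\geq eb^k-e(b^k-1)=e>0.
\]
For the upper bound, $c_{k-1}\leq a$ gives
\[
h\leq ab^k<ab^k+e=m,
\]
because $e>0$. Hence $0<h<m$, which is the first claim. Both inequalities are strict, and no case distinction on the digits is needed.

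For the second claim, assume $m$ is prime. The paper has already shown, via Lemmas~\ref{lemma:periodic} and~\ref{lemma:state}, that a state is strictly periodic exactly when $0\leq h\leq m$, so our states are strictly periodic. Theorem~\ref{teo:mcg} then says that the state integers evolve as $h_{i+1}=Bh_i\bmod m$, staying in $(0\..m)$. Since $m$ is prime, $h_0$ is invertible modulo $m$, so the orbit length of $h_0$ under multiplication by $B=b^{-1}\bmod m$ is $\ord_m(B)=\ord_m(b)$.

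The state integer determines the state, so the state sequence has the same period $\ord_m(b)$. Because the output is the $b$-adic expansion of $-h_0/m$, the output period equals the state period by injectivity of the map from rationals to $\Zb$.

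For nondegeneracy, suppose the output were constant, say equal to $x$. Then $-h/m=x/(1-b)$, so $h(b-1)=xm$. Since $m$ is prime and $m=ab^k+e>b-1$, $m$ is coprime with $b-1$, so $m\mid h$. This contradicts $0<h<m$.

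There is no real obstacle here. The only point needing care is this nondegeneracy step, where the primality of $m$ together with $m>b-1$ is what yields $m\mid h$; everything else is two one-line inequalities.
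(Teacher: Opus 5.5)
Your proof is correct and follows the paper's argument: the same two bounds $h\geq e(b^k-X)\geq e>0$ and $h\leq ab^k<ab^k+e=m$ from $0\leq X<b^k$, followed by an appeal to Theorem~\ref{teo:mcg} for the second claim. Your expanded treatment of that claim (orbit length $\ord_m(B)=\ord_m(b)$ because $h_0$ is invertible modulo the prime $m$, and nondegeneracy via $h(b-1)=xm$ with $\gcd(m,b-1)=1$) spells out what the paper leaves implicit, and matches the argument of the paper's footnote on degenerate states.
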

\begin{proof}
Since $X\in[0\..b^k)$, we have $h=c_{k-1}b^k-eX\geq e(b^k-X)\geq e>0$, and
$h\leq ab^k-eX\leq ab^k<ab^k+e=m$. The second statement follows from
Theorem~\ref{teo:mcg}.
\end{proof}

The condition of Proposition~\ref{prop:seed} excludes some strictly periodic
states (e.g., if $e>1$, the state with $c_{k-1}=1$, $x_0=1$, and all other
digits equal to zero, for which $h=b^k-e$), but in practice the loss is
negligible: there are $(a-e+1)b^k$ states satisfying the condition, out of the
$m-1=ab^k+e-1$ nondegenerate strictly periodic states. If $a\geq b-b/2^8$ (as
it will happen for all our multipliers) and $a+e<b$ (see below), then
$e/a<1/255$, so we lose less than $0.4\%$ of the states.\footnote{The condition of Proposition~\ref{prop:seed} is a condition for
\emph{seeding}, not an invariant: strictly periodic states may have carries
outside $[e\..a]$, and a routine validating a state (e.g., one restored from a
file) should rather check that $0<h<m$.}

As for the range of the carry, if $0\leq h\leq m$ then
$c_{k-1}=(h+eX)/b^k\in[0\..a+e]$.\footnote{This result is also contained in
Theorem~3.1 of~\cite{GoKEMCRNGMP} and in~\cite[Chapter~4]{GoKASRS}; for an MWC
generator, the range is $[0\..a)$.} Thus, if $a+e<b$, the carry can be stored in
a single $w$-bit word, and the intermediate results
$ax_{n-k}+c_{n-1}<ab+b\leq b^2$ and $ax_{n-k}+c_{n-1}+ex_n<(a+e+1)b\leq b^2$
fit into $2w$ bits, as assumed by the code in Figures~\ref{fig:code}
and~\ref{fig:code32}.

Finally, since the state integers evolve as an MCG, states whose state integers
are in a simple arithmetic relation produce outputs in the same relation. In
particular, the carry of an MWC generator should never be set to zero: in that case the state
integer is $X$, so, for example, the outputs of \mwc[128] seeded with $x=2$ and
$x=3$ are, modulo $2^{64}$ and up to a difference of at most two, twice and
three times the outputs of the generator seeded with $x=1$. 
We thus suggest to
set the carry to one: the state integer $b^k+X$ then lies in $[b^k\..2b^k)$, so
no state integer is a small multiple of another.

The same problem
occurs for a GMWC generator with carry $a$, whose state integer is then
congruent to $-e(X+1)$ modulo $m$, and with carry $e$, whose state integer is
$e(b^k-X)$; the carry $2e$, which places the state integer $e(2b^k-X)$ in
$(eb^k\..2eb^k]$, avoids it.

In all cases, however, generators seeded with consecutive values of $X$ and the
same carry have state integers in arithmetic progression, so the second
difference of the outputs of three such generators is, modulo $b$, always in
$\{-1,0,1\}$: the state should be derived from the seed using a good mixing
function (e.g., SplitMix~\cite{SLFFSPNG}), mapping the carry into the allowed
range, or parallel streams should be obtained by jumping.

\subsection{Jumping ahead}
\label{sec:jump}

Theorem~\ref{teo:mcg} provides an immediate way to jump ahead, as already
observed for AWC and SWB generators by Tezuka, L'Ecuyer, and
Couture~\cite[Section~3]{TLCLSACSBRNG}: to advance a generator by $J$ steps, one
computes the state integer $h$, multiplies it by the precomputed constant
$B^J\bmod m$, and recovers the state from the result. For MWC generators the
first and last step are trivial, as the state integer is just the state read as
a number. For GMWC generators we have $h=c_{k-1}b^k-eX$, and given $h$ we recover
the digits by $X=(-e)^{-1}h\bmod b^k$ (for $k=1$, $x_0=Ah\bmod b$) and the carry
by $c_{k-1}=(h+eX)/b^k$. The only cost is a modular multiplication on
$(k+1)w$-bit integers; jumping backwards is equally easy, as $B^{-1}=b$. With a
jump of $J$ steps one obtains $\lfloor (m-1)/(2J)\rfloor$ non-overlapping streams
(e.g., about $2^{63}$ streams of length $2^{64}$ for \mwc[128]).

\ifanon{Our code}{The code distributed by the author}, which uses the multipliers suggested in
Section~\ref{sec:results}, can jump by $J=2^{(k+1)w/2}$, by $J=2^{3(k+1)w/4}$
(where $b=2^w$), and by an arbitrary number of steps (by modular
exponentiation).

\section{The lattice structure of (G)MWC generators}
\label{sec:spectral}

\subsection{The spectral test}
\label{sec:spectraltest}

Consider an MCG with modulus $m$ and multiplier $\beta$, and let
\[
\Psi_t=\bigl\{\,(h,\beta h\bmod m,\dots,\beta^{t-1}h\bmod m)/m\bigm| 0\leq h<m\,\bigr\}
\]
be the set of all vectors of $t$ successive values produced by the generator
from all initial states (including the zero vector). Then, $\Psi_t=L_t\cap[0\..1)^t$,
where $L_t$ is the lattice generated by $\Z^t$ and by the vector
$(1,\beta,\dots,\beta^{t-1})/m$; since $L_t$ contains $\Z^t$ with index $m$, its
determinant is $1/m$. The maximal distance $d_t$ between successive hyperplanes,
over all families of equidistant parallel hyperplanes covering $L_t$, is
$1/\nu_t$, where $\nu_t$ is the Euclidean length of a shortest nonzero vector in
the \emph{dual lattice}
\[
L^*_t(\beta)=\bigl\{\,\bm z\in\R^t\bigm| \bm z\cdot\bm v\in\Z\text{ for all $\bm v\in L_t$}\,\bigr\}
=\Bigl\{\,(z_0,\dots,z_{t-1})\in\Z^t\Bigm|\sum_{i=0}^{t-1}z_i\beta^i\equiv
0\pmod m\,\Bigr\}:
\]
indeed, every nonzero $\bm z\in L^*_t(\beta)$ defines the family of hyperplanes
$\{\,\bm v\mid\bm z\cdot\bm v=j\,\}$, $j\in\Z$, which are $1/\|\bm z\|$ apart and
cover $L_t$. Computing $\nu_t$ is the \emph{spectral
test}~\cite{CoMFAURNG,KnuACPII}.

The dual lattice $L^*_t(\beta)$ has determinant $m$. The \emph{Hermite constant}
$\gamma_t$ is the smallest constant such that every $t$-dimensional lattice with
determinant $D$ contains a nonzero vector of length at most
$\gamma_t^{1/2}D^{1/t}$; thus,
\[
\nu_t\leq\nu^*_t(m)=\gamma_t^{1/2}m^{1/t},
\]
or, equivalently, $d_t\geq d^*_t(m)=1/\nu^*_t(m)$. Following
L'Ecuyer~\cite{LEcTLCGDSGLS}, we use the normalized \emph{figure of merit}
\[
f_t=\frac{\nu_t}{\nu^*_t(m)}=\frac{d^*_t(m)}{d_t}\in(0\..1]
\]
(L'Ecuyer writes $\ell_t$ and $S_t$ for $\nu_t$ and $f_t$): values close to one denote a good lattice structure,
whereas values close to zero denote a bad one. For $2\leq t\leq 8$ the Hermite constants are known
exactly~\cite{CoSSPLG}: $\gamma_2^2=4/3$, $\gamma_3^3=2$, $\gamma_4^4=4$,
$\gamma_5^5=8$, $\gamma_6^6=64/3$, $\gamma_7^7=64$, and $\gamma_8=2$. Since the
normalization hides the actual distance between hyperplanes, we will also report
$\log_2\nu_t$: the hyperplanes are at distance $d_t=2^{-\log_2\nu_t}$.

More generally, given a set of indices $I=\{i_1,i_2,\dots,i_t\}$ with
$0=i_1<i_2<\dots<i_t$, one can apply the spectral test to the
\emph{projection}
\[
\Psi_I=\bigl\{\,(\beta^{i_1}h\bmod m,\beta^{i_2}h\bmod m,\dots,\beta^{i_t}h\bmod m)/m\bigm|0\leq h<m\,\bigr\}
\]
of $\Psi_{i_t+1}$ on the coordinates in $I$, which, if $I$ is not of the form
$\{0,1,\dots,t-1\}$, contains vectors of non-successive values, or vectors with
\emph{lacunary indices}~\cite{LEcBLSVNVPSLR,LeLVRLR}. The dual lattice
$L^*_I(\beta)$ of the projection is formed by the vectors $(z_i)_{i\in I}\in\Z^I$
such that $\sum_{i\in I}z_i\beta^i\equiv0\pmod m$, and we denote by $\nu_I$
the length of its shortest nonzero vectors, and by $d_I=1/\nu_I$ the
corresponding distance between hyperplanes; thus, $\nu_{\{0,1,\dots,t-1\}}=\nu_t$.
Note that the vectors of $L^*_t(\beta)$ whose coordinates outside $I$ are zero
are exactly the vectors of $L^*_I(\beta)$, extended with zeroes, so
$\nu_t\leq\nu_I$ whenever $I\subseteq\{0,1,\dots,t-1\}$; in particular, $\nu_t$
is nonincreasing in $t$~\cite{KnuACPII}.

By Theorem~\ref{teo:mcg}, a (G)MWC generator simulates the MCG with multiplier
$B=b^{-1}\bmod m$: the output $x_i/b$ differs from $h_{i+1}/m$ by less than
$1/b$, so vectors of $t$ successive outputs $(x_n,\dots,x_{n+t-1})/b$ are, up
to an error smaller than $1/b$ in each coordinate, points of $\Psi_t$ for the
multiplier $B$~\cite[Theorem~1]{CoLDPMCRNG}, and the same is true of any
projection. Multiplying
the congruence defining $L^*_t(B)$ by $b^{t-1}$ we see that $L^*_t(B)$ is just
$L^*_t(b)$ with coordinates in reverse order (the reverse sequence has the same
lattice structure~\cite[Section~4]{TLCLSACSBRNG}), so we can (and will) compute
figures of merit using the more convenient multiplier $b$.

The dual lattice $L^*_t(b)$ contains two kinds of short vectors. Letting
$\bm\varepsilon_i$ be the $i$-th vector of the canonical basis, $0\leq i<t$, the
vectors
\[
\bm s_i=-b\,\bm\varepsilon_i+\bm\varepsilon_{i+1},\qquad 0\leq i\leq t-2,
\]
belong to $L^*_t(b)$, because $-b\cdot b^i+b^{i+1}=0$; and, if $t\geq k+1$, the
vector
\[
\bm u=-a_0\,\bm\varepsilon_0+a\,\bm\varepsilon_k
\]
belongs to $L^*_t(b)$, because $-a_0+ab^k=m$ (and so do its shifted copies
$-a_0\,\bm\varepsilon_i+a\,\bm\varepsilon_{i+k}$, $i+k<t$). Couture and
L'Ecuyer~\cite{CoLLSCLCSRASG,CoLDPMCRNG} used these vectors to show that in
dimensions up to $k+1$ the figures of merit are essentially fixed by $b$ and
$a$. The first statement of the following proposition is Theorem~3(i)
of~\cite{CoLDPMCRNG} (see also Theorem~1 of~\cite{CoLLSCLCSRASG}):

\begin{proposition}
\label{prop:low}
Assume $b^k<m$ (which is true whenever $a_0<0$, or $a\geq 2$ and $a_0<b^k$). For
$2\leq t\leq k$, $\nu_t=\sqrt{b^2+1}$, and thus
$f_t=\sqrt{b^2+1}\big/\nu^*_t(m)$. Moreover, $\nu_{k+1}\leq\sqrt{a^2+a_0^2}$,
and thus $f_{k+1}\leq \sqrt{a^2+a_0^2}\big/\nu^*_{k+1}(m)$.
\end{proposition}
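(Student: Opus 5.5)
The second statement is immediate and I would dispose of it first. For $t=k+1$ the vector $\bm u=-a_0\bm\varepsilon_0+a\bm\varepsilon_k$ lies in $L^*_{k+1}(b)$, because $-a_0+ab^k=m\equiv0\pmod m$. It is nonzero since $a>0$, and its length is $\sqrt{a^2+a_0^2}$, so $\nu_{k+1}\leq\sqrt{a^2+a_0^2}$. The bound on $f_{k+1}$ then follows from the definition $f_t=\nu_t/\nu^*_t(m)$.

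For $2\leq t\leq k$, the upper bound $\nu_t\leq\sqrt{b^2+1}$ is given by $\bm s_0=-b\bm\varepsilon_0+\bm\varepsilon_1\in L^*_t(b)$. The content is the matching lower bound. Take a nonzero $\bm z=(z_0,\dots,z_{t-1})\in L^*_t(b)$ and set $N=\sum_{i<t}z_ib^i$, so that $m\mid N$. I would split into two cases.

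\emph{Case $N=0$.} Let $j$ be the smallest index with $z_j\neq0$. Reducing $N=0$ modulo $b^{j+1}$ gives $b\mid z_j$, so $|z_j|\geq b$. Some later coordinate must also be nonzero, since otherwise $N=z_jb^j\neq0$. Hence $\|\bm z\|^2\geq b^2+1$.

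\emph{Case $N\neq0$.} Then $|N|\geq m>b^k$, where the last inequality is the hypothesis $b^k<m$. If every $|z_i|<b$, then $|N|\leq(b-1)\sum_{i<t}b^i=b^t-1<b^k$, since $t\leq k$; this is a contradiction. So some $|z_i|\geq b$. To reach $b^2+1$ rather than only $b^2$, I need to exclude $\bm z=\pm b\bm\varepsilon_i$ alone. In that case $|N|=b^{i+1}\leq b^t\leq b^k<m$, again a contradiction. So a second coordinate is nonzero, and $\|\bm z\|^2\geq b^2+1$.

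This gives $\nu_t=\sqrt{b^2+1}$ and hence the stated value of $f_t$. The step needing the most care is the case $N\neq0$: the crude bound on $|N|$ has to be compared exactly with $b^k$, and the hypothesis $b^k<m$ is used precisely there.

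Finally, I would check the parenthetical sufficient conditions. If $a_0<0$, then $m=ab^k+|a_0|>b^k$ because $a\geq1$. If $a\geq2$ and $a_0<b^k$, then $m=ab^k-a_0\geq2b^k-a_0>b^k$.
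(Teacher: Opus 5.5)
Your proof is correct. The bound on $\nu_{k+1}$ via $\bm u$ and the upper bound via $\bm s_0$ are the same as in the paper. You differ from the paper only in proving the lower bound $\nu_t\ge\sqrt{b^2+1}$ yourself.

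The paper does not argue that lower bound directly. It observes that $t\le k$ gives $b^t<m$, that is, $b<m^{1/t}$, and then appeals to the proof of Proposition~1 of Steele and Vigna, which it notes uses only the definition of the dual lattice. Your dichotomy is in effect a direct proof of that cited fact:
\begin{itemize}
\item if $N=0$, the lowest nonzero coordinate must be a nonzero multiple of $b$ and cannot stand alone;
\item if $N\neq0$, then $|N|\ge m>b^t-1$, which forces some coordinate of absolute value at least $b$.
\end{itemize}
The argument goes through unchanged with $b$ replaced by any multiplier $\beta\ge2$ satisfying $\beta^t<m$. It makes explicit the two places where $b^k<m$ is used: bounding $|N|$ when all coordinates are digit-sized, and ruling out $\pm b\bm\varepsilon_i$. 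The cost is a few lines more than the paper's one-sentence reduction.

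One wording fix. After you exclude $\bm z=\pm b\bm\varepsilon_i$, ``a second coordinate is nonzero'' follows only when the large coordinate has absolute value exactly $b$. If it exceeds $b$, you get $\|\bm z\|^2\ge(b+1)^2>b^2+1$ directly, so the conclusion is unaffected.
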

\begin{proof}
For $t\leq k$ we have $b^t<m$, that is, $b<m^{1/t}$. The proof of
Proposition~1 of~\cite{StVCESGMCPNGjr}, which uses only the definition of the
dual lattice, shows that in this case the shortest nonzero vectors of
$L^*_t(b)$ have length $\|\bm s_0\|=\sqrt{b^2+1}$. The second statement follows
from $\bm u\in L^*_{k+1}(b)$.
\end{proof}

Since $m\approx ab^k$, for $t\leq k$ the figure of merit $f_t$ is approximately
$b^{1-(k+1)/t}/\gamma_t^{1/2}$, a very small number, whereas the bound on
$f_{k+1}$ is approximately $(a/b)^{k/(k+1)}/\gamma_{k+1}^{1/2}$, and it is
attained under mild hypotheses for MWC generators~\cite[Theorem~3]{CoLDPMCRNG},
and empirically for our GMWC generators (Table~\ref{tab:ranges}; the tables of
this section are computed from the candidates of the search of
Section~\ref{sec:search}). The bad
lattice structure in dimensions $t\leq k$ is not a real problem: the distance
between hyperplanes is about $1/b$, the resolution of the output, as $k$-tuples
of successive outputs are the $k$ most significant digits of a single state
integer divided by $m$.

\begin{table}
\caption{\label{tab:ranges}Minimum and maximum value of the figures of merit
over all candidates for $b=2^{64}$. Grayed-out dimensions are those in which the
figure of merit is essentially fixed (the maximum value is less than $1.003$
times the minimum value).}
\renewcommand{\arraystretch}{1.1}
\setlength{\tabcolsep}{2.5pt}
\small
\begin{tabular}{llrrrrrrr}
Generator & & \multicolumn{1}{c}{$f_2$} & \multicolumn{1}{c}{$f_3$} & \multicolumn{1}{c}{$f_4$} & \multicolumn{1}{c}{$f_5$} & \multicolumn{1}{c}{$f_6$} & \multicolumn{1}{c}{$f_7$} & \multicolumn{1}{c}{$f_8$}\\
\hline
\mwc[128] & min & \textcolor{gray}{$0.929$} & $1.04\times 10^{-7}$ & $1.60\times 10^{-4}$ & $0.013$ & $0.044$ & $0.069$ & $0.092$\\
 & max & \textcolor{gray}{$0.931$} & $7.00\times 10^{-4}$ & $0.946$ & $0.936$ & $0.902$ & $0.876$ & $0.849$\\
\mwc[192] & min & \textcolor{gray}{$2.17\times 10^{-10}$} & \textcolor{gray}{$0.889$} & $3.72\times 10^{-9}$ & $2.79\times 10^{-6}$ & $2.24\times 10^{-4}$ & $5.11\times 10^{-3}$ & $0.052$\\
 & max & \textcolor{gray}{$2.17\times 10^{-10}$} & \textcolor{gray}{$0.891$} & $1.64\times 10^{-5}$ & $0.012$ & $0.948$ & $0.267$ & $0.850$\\
\mwc[256] & min & \textcolor{gray}{$5.04\times 10^{-20}$} & \textcolor{gray}{$3.37\times 10^{-7}$} & \textcolor{gray}{$0.838$} & $8.08\times 10^{-10}$ & $2.86\times 10^{-7}$ & $1.87\times 10^{-5}$ & $4.24\times 10^{-4}$\\
 & max & \textcolor{gray}{$5.05\times 10^{-20}$} & \textcolor{gray}{$3.38\times 10^{-7}$} & \textcolor{gray}{$0.841$} & $1.72\times 10^{-6}$ & $5.83\times 10^{-4}$ & $0.038$ & $0.864$\\
\gmwc[128] & min & \textcolor{gray}{$0.929$} & $9.14\times 10^{-4}$ & $0.011$ & $0.028$ & $0.036$ & $0.063$ & $0.072$\\
 & max & \textcolor{gray}{$0.931$} & $0.988$ & $0.950$ & $0.935$ & $0.905$ & $0.881$ & $0.847$\\
\gmwc[256] & min & \textcolor{gray}{$5.04\times 10^{-20}$} & \textcolor{gray}{$3.37\times 10^{-7}$} & \textcolor{gray}{$0.838$} & $1.55\times 10^{-5}$ & $2.13\times 10^{-3}$ & $0.103$ & $0.105$\\
 & max & \textcolor{gray}{$5.05\times 10^{-20}$} & \textcolor{gray}{$3.38\times 10^{-7}$} & \textcolor{gray}{$0.841$} & $2.44\times 10^{-3}$ & $0.046$ & $0.870$ & $0.836$\\
\end{tabular}
\end{table}

\subsection{Dimension \texorpdfstring{$k+2$}{k+2} and lacunary triples}
\label{sec:lacunary}

The real limitation of MWC generators appears in dimension $k+2$. Couture and
L'Ecuyer~\cite[Lemma~2]{CoLDPMCRNG} observed that, in dimensions larger than
$k+1$, short vectors appear when a small multiple of the shortest vector in
dimension $k+1$ is close to $b\Z^{k+1}$, and showed that the eight nonzero
coefficients of Marsaglia's generators with base $2^{16}$ can be chosen so as to
avoid the phenomenon. When only one coefficient is nonzero, however, all
coordinates of $\bm u$ but two are zero, and the phenomenon cannot be avoided.

We now make this observation explicit and quantitative. For AWC and SWB
generators, whose modulus has base-$b$ digits in $\{0,\pm1\}$, a single dual vector
with coordinates in $\{0,\pm1\}$ shows that the points lie in planes at least
$1/\sqrt3$ apart~\cite{TLCLSACSBRNG,CoLLSCLCSRASG,LEcBLSVNVPSLR}. For an MWC
generator, instead, $\bm u$ has length about $b$, and short dual vectors arise only
as combinations of $\bm u$ and $\bm s_k$, that is, as short vectors of a
two-dimensional sublattice spanned by long vectors. Bounding such vectors
uniformly in the multiplier is the technique used by L'Ecuyer and
Touzin~\cite{LETDLRNGRM} and by L'Ecuyer and Simard~\cite{LESLSSCMRRNG} to show
that, for multiple recursive generators with few nonzero coefficients, some
projections with lacunary indices have a bad lattice structure whatever the
coefficients; the order of magnitude of our bound can be derived also from
Lemma~2 of~\cite{CoLDPMCRNG} and Dirichlet's approximation theorem. To the best
of our knowledge, however, these bounds have not been stated for (G)MWC
generators, and the exact characterization of $\nu_{k+2}$ in the second statement
of Proposition~\ref{prop:twodim}, which concerns successive values and not only a
projection, the bound of Proposition~\ref{prop:threedim} for GMWC generators
with $k\geq2$, and its generalization in Section~\ref{sec:beyond} are new.\footnote{Goresky and
Klapper~\cite[Chapter~4]{GoKASRS}, for example, only remark that for large
dimensions the distribution of tuples of MWC sequences is expected to suffer
from the same shortcomings as that of AWC and SWB generators.}

In what follows, we let
\[
\bar\nu=\gamma_2^{1/2}\bigl(a_0^2(b^2+1)+a^2\bigr)^{1/4}.
\]

\begin{proposition}
\label{prop:twodim}
For every (G)MWC generator, $\nu_{k+2}\leq\bar\nu$, and thus
\[
f_{k+2}\leq\frac{\bar\nu}{\nu^*_{k+2}(m)}=\frac{\gamma_2^{1/2}\bigl(a_0^2(b^2+1)+a^2\bigr)^{1/4}}{\gamma_{k+2}^{1/2}\,m^{1/(k+2)}}.
\]
Moreover, if $|a_0|=1$ and $4\sqrt b\leq a<b$, the shortest nonzero vectors of
$L^*_{k+2}(b)$ are the shortest nonzero vectors of the sublattice spanned by
$\bm u$ and $\bm s_k$, and thus
\[
\nu_{k+2}=\min_{(x,j)\in\Z^2\setminus\{(0,0)\}}\sqrt{x^2+(ax-bj)^2+j^2}.
\]
\end{proposition}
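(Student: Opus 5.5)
The first statement follows from a Gram determinant computation. The second is proved by showing that every nonzero vector of $L^*_{k+2}(b)$ of length at most $\bar\nu$ lies in the sublattice spanned by $\bm u$ and $\bm s_k$.

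\emph{First statement.} I would work in $L^*_{k+2}(b)$ and consider the two-dimensional sublattice $\Lambda$ spanned by $\bm u=-a_0\bm\varepsilon_0+a\bm\varepsilon_k$ and $\bm s_k=-b\bm\varepsilon_k+\bm\varepsilon_{k+1}$; both vectors belong to $L^*_{k+2}(b)$, as noted above. Its Gram determinant is
\[
\|\bm u\|^2\|\bm s_k\|^2-(\bm u\cdot\bm s_k)^2=(a_0^2+a^2)(b^2+1)-a^2b^2=a_0^2(b^2+1)+a^2 .
\]
So $\Lambda$ has determinant $\bigl(a_0^2(b^2+1)+a^2\bigr)^{1/2}$. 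By the definition of the Hermite constant, $\Lambda$ contains a nonzero vector of length at most $\gamma_2^{1/2}\bigl(a_0^2(b^2+1)+a^2\bigr)^{1/4}=\bar\nu$. Hence $\nu_{k+2}\le\bar\nu$, and dividing by $\nu^*_{k+2}(m)$ gives the bound on $f_{k+2}$.

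\emph{Second statement.} The elements of $\Lambda$ are $x\bm u+j\bm s_k=(-a_0x,0,\dots,0,ax-bj,j)$. When $|a_0|=1$ their lengths are $\sqrt{x^2+(ax-bj)^2+j^2}$, which gives the displayed formula once we know that every shortest vector lies in $\Lambda$. So I would take a nonzero $\bm z=(z_0,\dots,z_{k+1})\in L^*_{k+2}(b)$ with $\|\bm z\|\le\bar\nu$ and show $\bm z\in\Lambda$.

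Set $y=\sum_{i<k}z_ib^i$. The defining congruence reads $y+b^k(z_k+bz_{k+1})\equiv0\pmod m$. Multiplying by $a$ and using $ab^k\equiv a_0\pmod m$ gives
\[
ay+a_0(z_k+bz_{k+1})\equiv0\pmod m .
\]
The hypotheses give $\bar\nu\approx1.07\sqrt b$, so every $|z_i|$ is much smaller than $b$. Then $a|y|\lesssim a\bar\nu b^{k-1}\cdot(1+O(1/b))$ and $|z_k+bz_{k+1}|\lesssim b\bar\nu$, so the left-hand side has absolute value smaller than $m\approx ab^k$ (using $a\ge4\sqrt b$). It must therefore be zero: $ay=-a_0(z_k+bz_{k+1})$.

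\emph{Case 1: $z_1=\dots=z_{k-1}=0$.} This case is automatic when $k=1$. Then $y=z_0$, and with $x=-a_0z_0$ and $j=z_{k+1}$ the equation gives $z_k=ax-bj$. So $\bm z=x\bm u+j\bm s_k\in\Lambda$.

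\emph{Case 2: some $z_i$ with $1\le i<k$ is nonzero.} Let $i$ be the largest such index. Because all digits are bounded by $\bar\nu<b/2$, the balanced representation gives $|y|\ge b^i-\bar\nu\sum_{l<i}b^l\ge b/2$. Hence $b|z_{k+1}|+|z_k|\ge a|y|\ge ab/2$, so $|z_{k+1}|\ge a/2-\bar\nu/b\ge 2\sqrt b-1>\bar\nu$ for $b$ not tiny. This contradicts $\|\bm z\|\le\bar\nu$.

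I expect the main obstacle to be the constant bookkeeping. Specifically, one must check that $4\sqrt b\le a<b$ really makes both the ``$|\cdot|<m$'' estimate and the final inequality $a/2-O(1)>\bar\nu=\gamma_2^{1/2}(b^2+1+a^2)^{1/4}$ hold, possibly by a slightly sharper lower bound on $|y|$. I would also have to handle separately the small bases where $\bar\nu<b/2$ fails.
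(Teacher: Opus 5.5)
Your proposal is correct and follows the paper's proof essentially step by step: the same Gram determinant for the first statement; for the second, the same split into $U=\sum_{i<k}z_ib^i$ and $V=z_k+bz_{k+1}$, the same argument that $aU+a_0V$ is a multiple of $m$ smaller than $m$ in absolute value and hence zero, and the same lower bound on $|U|$ from the largest nonzero middle coordinate (the paper concludes by comparing $|V|=a|U|>\frac59ab$ with the upper bound $|V|<\frac5{12}ab$, rather than through $|z_{k+1}|$). For the bookkeeping, $\bar\nu$ can be as large as about $1.28\sqrt b$ (not $1.07\sqrt b$) when $a$ is close to $b$, so use the uniform bound $\bar\nu<1.3\sqrt b<a/3$, with which all your estimates still close; and $4\sqrt b\le a<b$ already forces $b>16$, so there are no small bases to handle separately.
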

\begin{proof}
In dimension $k+2$, consider the vectors
$\bm u=-a_0\bm\varepsilon_0+a\bm\varepsilon_k$ and
$\bm s_k=-b\bm\varepsilon_k+\bm\varepsilon_{k+1}$. We have
$\|\bm u\|^2=a_0^2+a^2$, $\|\bm s_k\|^2=b^2+1$, and $\bm u\cdot\bm s_k=-ab$, so
the two-dimensional sublattice of $L^*_{k+2}(b)$ they span has determinant $D$
with
\[
D^2=\|\bm u\|^2\|\bm s_k\|^2-(\bm u\cdot\bm s_k)^2=(a_0^2+a^2)(b^2+1)-a^2b^2=a_0^2(b^2+1)+a^2.
\]
By the definition of the Hermite constant, the sublattice contains a nonzero
vector of length at most $\gamma_2^{1/2}D^{1/2}=\bar\nu$, and so does
$L^*_{k+2}(b)$.

For the second statement, let $\bm z=(z_0,\dots,z_{k+1})$ be a shortest nonzero
vector of $L^*_{k+2}(b)$, and $n=\|\bm z\|$. Since $|a_0|=1$ and $a<b$, by the
first statement $n\leq\bar\nu<\gamma_2^{1/2}(2b^2+1)^{1/4}<1.3\sqrt b$, so
$n<a/3<b/3$; moreover, $b>16$, as $4\sqrt b<b$. Let
$U=\sum_{i<k}z_ib^i$ and $V=z_k+bz_{k+1}$. Since $U+b^kV=\sum_iz_ib^i\equiv
0\pmod m$ and $ab^k\equiv a_0\pmod m$, the integer $\Delta=aU+a_0V$ is a multiple
of $m$. Using $|z_i|\leq n$ and $b/(b-1)<4/3$, we obtain
\[
|U|\leq n\frac{b^k-1}{b-1}<\frac49b^k\qquad\text{and}\qquad|V|\leq n(b+1)<\frac5{12}ab,
\]
so $|\Delta|<(4/9+5/12)ab^k<m$, and thus $\Delta=0$. If $z_r\neq0$ for some $1\leq r<k$,
choosing the largest such $r$ we would have $|U|\geq
b^r-n(b^r-1)/(b-1)>\frac59b^r$, and thus $|V|=a|U|>\frac59ab$, a contradiction.
Hence, $U=z_0$, and $\Delta=0$ gives $z_k+bz_{k+1}=-a_0az_0$. Letting $x=-a_0z_0$
and $j=z_{k+1}$, we conclude that $\bm z=x\bm u+j\bm s_k$, whose squared length
is $x^2+(ax-bj)^2+j^2$.
\end{proof}

For MWC and complementary MWC generators $|a_0|=1$, and since $a<b$ we have
$a_0^2(b^2+1)+a^2<2b^2+1$, so $\bar\nu<\gamma_2^{1/2}(2b^2+1)^{1/4}\approx
1.28\sqrt b$. Thus, using $m\approx ab^k$, for $a$ close to $b$ we obtain
approximately
\[
f_{k+2}\lesssim \frac{2^{1/4}\gamma_2^{1/2}}{\gamma_{k+2}^{1/2}}\,b^{-\frac{k}{2(k+2)}}.
\]
For $b=2^{64}$, this ranges from about $7\times10^{-4}$ for $k=1$ to about
$1.7\times10^{-6}$ for $k=3$ (Table~\ref{tab:bounds}). Intuitively,
$\bm u$ is almost parallel to $\bm s_k$, as both are dominated by their $k$-th
coordinate, and combining them one can cancel the large coordinates.

By the second statement of Proposition~\ref{prop:twodim}, for MWC generators
$\nu_{k+2}$ depends only on $a$ and $b$, and not on the lag, and it can be
computed by reducing a two-dimensional lattice; since $\bar\nu$ is attained only
by a hexagonal lattice, the best multipliers are those for which this lattice is
almost hexagonal. For an MWC generator, the approach of Couture and
L'Ecuyer~\cite{CoLDPMCRNG}, who chose coefficients so that the multiples
$z_1\bm u$ are far from $b\Z^{k+1}$ for $0<|z_1|<3000$, amounts instead (dropping
the bound on $z_1$) to maximizing the length $\lambda$ of a shortest nonzero
vector $(-z_1,z_1a-jb)$ of the lattice generated by $(-1,a)$ and $(0,b)$, which
ignores the coordinate $j$ of $z_1\bm u+j\bm s_k$. Since $a<b$ and
$|z_1a-jb|\leq\lambda<b$ (by Lagrange's bound, $\lambda^4\leq 4b^2/3$), we have
$|j|\leq|z_1|\leq\lambda$, and thus $\lambda\leq\nu_{k+2}\leq\sqrt2\lambda$; but
the proxy is not accurate enough for selection: on a sample of $20\,000$
candidates of our \mwc[128] search, the best $\nu_3$ is $17\%$ larger than that
of the multiplier maximizing $\lambda$.

The vectors used in the proof of Proposition~\ref{prop:twodim} have a very
specific support, and this fact has important consequences:

\begin{corollary}
\label{cor:lacunary}
Let $I=\{0,1,k+1\}$. Then:
\begin{enumerate}
\item for the multiplier $B$, $\nu_I\leq\bar\nu$; thus, for every initial state
with $0<h_0<m$, the vectors $(x_n,x_{n+1},x_{n+k+1})/b$, $n\geq0$, lie, up to an
error smaller than $1/b$ in each coordinate, in a family of equidistant parallel
planes at least $1/\bar\nu$ apart;
\item for every $t\geq k+2$, $\nu_t\leq\bar\nu$, and thus
$f_t\leq\bar\nu/\nu^*_t(m)$.
\end{enumerate}
\end{corollary}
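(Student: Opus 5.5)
The plan is to reuse the short vector found in the proof of Proposition~\ref{prop:twodim}, exploiting the fact that it is supported only on the coordinates $0$, $k$, $k+1$. First, recall that $\bm u=-a_0\bm\varepsilon_0+a\bm\varepsilon_k$ and $\bm s_k=-b\bm\varepsilon_k+\bm\varepsilon_{k+1}$ both vanish outside $J=\{0,k,k+1\}$. By the first statement of Proposition~\ref{prop:twodim}, the sublattice they span contains a nonzero vector $\bm z$ with $\|\bm z\|\leq\bar\nu$, and $\bm z$ is also supported on $J$. Hence $\bm z$, restricted to $J$, is a nonzero vector of $L^*_J(b)$, and $\nu_J\leq\bar\nu$ for the multiplier $b$.

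Next, transfer this to the multiplier $B$ and the index set $I=\{0,1,k+1\}$ by reversal, as in the remark that $L^*_t(B)$ is $L^*_t(b)$ with coordinates reversed. Concretely, from $z_0+z_kb^k+z_{k+1}b^{k+1}\equiv0\pmod m$, multiplying by $B^{k+1}$ gives
\[
z_{k+1}+z_kB+z_0B^{k+1}\equiv0\pmod m.
\]
So $(z_{k+1},z_k,z_0)$, indexed by $(0,1,k+1)$, lies in $L^*_I(B)$, with the same length. Therefore $\nu_I\leq\bar\nu$ for $B$.

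The geometric conclusion then follows from the general facts in Section~\ref{sec:spectraltest}. The points $(h_{n+1},h_{n+2},h_{n+k+2})/m$ belong to $\Psi_I$ for $B$, since $h_{i+1}=Bh_i\bmod m$ by Theorem~\ref{teo:mcg}. They therefore lie on the family of planes $\{\bm v\mid\bm z'\cdot\bm v=j\}$, whose spacing is $1/\|\bm z'\|\geq1/\bar\nu$, where $\bm z'$ is the reversed vector. Item~3 of Theorem~\ref{teo:mcg} gives $0\leq h_{i+1}/m-x_i/b<1/b$, which supplies the stated coordinate-wise error for $(x_n,x_{n+1},x_{n+k+1})/b$. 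The only delicate point is the index bookkeeping: the offsets $n+1$, $n+2$, $n+k+2$ of the state integers must correspond to the outputs $x_n$, $x_{n+1}$, $x_{n+k+1}$, so the pattern $\{0,1,k+1\}$ is preserved.

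For the second item, take $t\geq k+2$ and use the nesting remark: vectors of $L^*_{k+2}(b)$ extended by zeros belong to $L^*_t(b)$. The bound $\nu_{k+2}\leq\bar\nu$ from Proposition~\ref{prop:twodim} thus gives $\nu_t\leq\nu_{k+2}\leq\bar\nu$, equivalently by monotonicity of $\nu_t$. Dividing by $\nu^*_t(m)$ yields $f_t\leq\bar\nu/\nu^*_t(m)$.

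None of these steps is a real obstacle. The reversal step is the one needing care, to ensure that the support $\{0,k,k+1\}$ maps to $\{0,1,k+1\}$ and not to some other pattern.
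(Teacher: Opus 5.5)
Your proposal is correct and follows the paper's proof. You use the support $\{0,k,k+1\}$ of $\bm u$ and $\bm s_k$ to get $\nu_J\leq\bar\nu$ for the multiplier $b$, and then reversal of coordinates to pass to $I=\{0,1,k+1\}$ for $B$, which you make explicit by multiplying the congruence by $B^{k+1}$. Item~3 of Theorem~\ref{teo:mcg}, with correct index bookkeeping, gives the coordinate-wise error, and monotonicity of $\nu_t$ gives the second item.
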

\begin{proof}
The vectors $\bm u$ and $\bm s_k$ have support $\{0,k,k+1\}$, so the proof of
Proposition~\ref{prop:twodim} shows that $\nu_J\leq\bar\nu$ for $J=\{0,k,k+1\}$
and the multiplier $b$. Reversing the coordinates in dimension $k+2$, the indices
$\{0,k,k+1\}$ for the multiplier $b$ correspond to the indices $\{0,1,k+1\}$
for the multiplier $B$ (Section~\ref{sec:spectraltest}), and the first claim
follows from Theorem~\ref{teo:mcg}. The second statement follows from
Proposition~\ref{prop:twodim}, as $\nu_t$ is nonincreasing in $t$.
\end{proof}

Under the hypotheses of the second statement of Proposition~\ref{prop:twodim},
its proof, applied to the vectors of $L^*_J(b)$ (for which $U=z_0$), shows also
that $\nu_I=\nu_J=\nu_{k+2}$, so the planes of the first statement are
exactly $d_{k+2}$ apart.

Thus, for every MWC or complementary MWC generator with $a<b$,
\emph{independently of the multiplier and of the lag}, the vectors
$(x_n,x_{n+1},x_{n+k+1})/b$ lie in a family of equidistant parallel planes at
least about $1/(1.28\sqrt b)$ apart, that is, about $2^{-32.35}$ for $b=2^{64}$
and $2^{-16.35}$ for $b=2^{32}$. A larger lag moves the defect to a projection
with lacunary indices, and makes it invisible to the figures of merit of
dimensions $3$ to $k+1$, but it does not make it smaller: the decrease of
$f_{k+2}$ with $k$ is entirely due to the normalization by $m^{1/(k+2)}$. The
phenomenon is analogous to the bad lattice structure of vectors of
non-successive values produced by recurrences with few nonzero
coefficients~\cite{LEcBLSVNVPSLR,LETDLRNGRM,LESLSSCMRRNG}, and to the defect of AWC and SWB generators in
dimensions larger than their largest lag~\cite{TLCLSACSBRNG}.

From a practical viewpoint, the distinction between successive and lacunary
indices is not the relevant one: an application using $k+1$ outputs at each
step sees the triples $(x_n,x_{n+1},x_{n+k+1})$ as naturally as triples of
successive outputs. What matters is the number of points a test needs to detect
the defect, which depends on the distance between the planes, and thus on $b$,
but not on $k$: as we show in Section~\ref{sec:detect}, for $b=2^{32}$ the defect is detected in a few
seconds, whereas for $b=2^{64}$ our test uses $3$ to $5$\,TB of output.
The scrambler X of Section~\ref{sec:x} removes, our tests, all such defects.

For GMWC generators, instead, the first coordinate of $\bm u$ is $e$, and if
$e$ is large the bound of Proposition~\ref{prop:twodim} becomes vacuous: for
$k=1$ this happens as soon as $e\geq b^{1/3}$, approximately. There is, however,
another sublattice that limits $f_{k+2}$ when $k\geq 2$:

\begin{proposition}
\label{prop:threedim}
For every (G)MWC generator with $k\geq 2$,
$\nu_{k+2}\leq\gamma_3^{1/2}\bigl((b^2+1)(a_0^2+a^2)\bigr)^{1/6}$, and thus
\[
f_{k+2}\leq\frac{\gamma_3^{1/2}\bigl((b^2+1)(a_0^2+a^2)\bigr)^{1/6}}{\gamma_{k+2}^{1/2}\,m^{1/(k+2)}}.
\]
\end{proposition}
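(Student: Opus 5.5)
We mimic the proof of Proposition~\ref{prop:twodim}, but with a three-dimensional sublattice of $L^*_{k+2}(b)$. Since $k\geq2$, dimension $k+2$ has coordinates $0,\dots,k+1$. We use the vectors
\[
\bm u=-a_0\bm\varepsilon_0+a\bm\varepsilon_k,\qquad
\bm s_0=-b\bm\varepsilon_0+\bm\varepsilon_1,\qquad
\bm u'=-a_0\bm\varepsilon_1+a\bm\varepsilon_{k+1}.
\]
All three lie in $L^*_{k+2}(b)$: $\bm u'$ is the shifted copy of $\bm u$, which requires $1+k<k+2$. The natural choice is a family whose Gram determinant factors as $(b^2+1)(a_0^2+a^2)$. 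Candidates are $\{\bm u,\bm u',\bm s_0\}$ or $\{\bm u,\bm s_0,\bm s_k\}$; I will pick whichever makes the Gram determinant exactly $(b^2+1)(a_0^2+a^2)$.

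Take $\bm u$, $\bm u'$, $\bm s_0$. Because $k\geq2$, the index $k$ differs from $0$ and $1$, and $k+1$ differs from $0$, $1$ and $k$, so the supports overlap only on coordinates $0$ and $1$. The inner products are
\[
\bm u\cdot\bm u'=0,\qquad \bm u\cdot\bm s_0=a_0b,\qquad \bm u'\cdot\bm s_0=-a_0,
\]
with $\|\bm u\|^2=\|\bm u'\|^2=a_0^2+a^2$ and $\|\bm s_0\|^2=b^2+1$. Write $N=a_0^2+a^2$ and $S=b^2+1$. The Gram determinant is then
\[
D^2=N\bigl(NS-a_0^2\bigr)-a_0^2b^2N=N\bigl(NS-a_0^2S\bigr)=N\,a^2S .
\]
This is not quite $SN$, so instead I will use the linearly independent vectors $\bm u$, $\bm u'$ and the cleaner combination $\bm w=b\bm u'$ replaced by $\bm s_0$.

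Alternatively, use the rank-three lattice $\Lambda=L^*_{k+2}(b)\cap(\R^{\{0,1,k,k+1\}}$-type subspace$)$ and compute its determinant via the index. Concretely, the vectors in $L^*_{k+2}(b)$ supported on $\{0,1,k,k+1\}$ satisfying one further linear constraint form a rank-$3$ lattice. The simplest route is just the Gram computation: $\sqrt{a^2NS}\leq\sqrt{N^2S}$ would give the weaker bound $(N^2S)^{1/6}$, which is not what we want. So I will choose instead $\bm u$, $\bm s_0$, $\bm s_k$, where $\bm s_k=-b\bm\varepsilon_k+\bm\varepsilon_{k+1}$. Then
\[
\bm u\cdot\bm s_0=a_0b,\qquad \bm u\cdot\bm s_k=-ab,\qquad \bm s_0\cdot\bm s_k=0,
\]
and
\[
D^2=N S^2-a_0^2b^2S-a^2b^2S=S\bigl(NS-Nb^2\bigr)=SN .
\]
This is exactly $(b^2+1)(a_0^2+a^2)$.

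Hence the sublattice spanned by $\bm u,\bm s_0,\bm s_k$ has determinant $\sqrt{SN}$. By the definition of the Hermite constant in dimension $3$, it contains a nonzero vector of length at most $\gamma_3^{1/2}(SN)^{1/6}$, and so does $L^*_{k+2}(b)$. This gives the bound on $\nu_{k+2}$. The bound on $f_{k+2}$ follows by dividing by $\nu^*_{k+2}(m)=\gamma_{k+2}^{1/2}m^{1/(k+2)}$, and we pass from $B$ to $b$ by coordinate reversal, as in Section~\ref{sec:spectraltest}.

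The only real obstacle is choosing a triple that is pairwise well aligned, so that the Gram determinant collapses to $SN$. The hypothesis $k\geq2$ is exactly what guarantees that $\bm s_0$ and $\bm s_k$ have disjoint supports, so that $\bm s_0\cdot\bm s_k=0$, and that the three vectors are linearly independent.
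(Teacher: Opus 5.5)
Your final argument is correct and is exactly the paper's proof. You take the sublattice spanned by $\bm u$, $\bm s_0$, $\bm s_k$, whose Gram determinant collapses to $(b^2+1)(a_0^2+a^2)$ because $k\geq2$ gives $\bm s_0\cdot\bm s_k=0$, and then apply the Hermite bound with $\gamma_3$. In a written version, cut the abandoned detour through $\bm u'=-a_0\bm\varepsilon_1+a\bm\varepsilon_{k+1}$ (you computed its Gram determinant $a^2(a_0^2+a^2)(b^2+1)$ correctly, but it only gives a weaker bound) and the garbled sentence about $\bm w$.
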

\begin{proof}
In dimension $k+2$, consider the vectors $\bm u=-a_0\bm\varepsilon_0+a\bm\varepsilon_k$,
$\bm s_0=-b\bm\varepsilon_0+\bm\varepsilon_1$, and
$\bm s_k=-b\bm\varepsilon_k+\bm\varepsilon_{k+1}$. Since $k\geq 2$, $\bm s_0$ and
$\bm s_k$ have disjoint supports, and $\bm s_0\cdot\bm s_k=0$. Moreover,
$\bm u\cdot\bm s_0=a_0b$ and $\bm u\cdot\bm s_k=-ab$. The Gram determinant of
the three vectors is thus
\[
(a_0^2+a^2)(b^2+1)^2-(b^2+1)a_0^2b^2-(b^2+1)a^2b^2=(b^2+1)(a_0^2+a^2),
\]
and we conclude as in the previous proof, using $\gamma_3$.
\end{proof}

For GMWC generators with $e<b$ the bound on $f_{k+2}$ is approximately
$b^{-\frac{k-1}{3(k+2)}}$, up to a factor $\gamma_3^{1/2}/\gamma_{k+2}^{1/2}$,
which is much larger than the bound for MWC generators, but still very small:
for $b=2^{64}$ it is approximately $0.023$ for $k=2$ and $2.5\times10^{-3}$ for
$k=3$. For $k\geq2$, $e\approx b^{1/3}$ is also the point at which the bound of
Proposition~\ref{prop:twodim} stops being stronger than the bound of
Proposition~\ref{prop:threedim}; for MWC generators (at least when $b/2\leq
a<b$, as for all our multipliers) the bound of Proposition~\ref{prop:twodim} is
stronger. As in Corollary~\ref{cor:lacunary}, the vectors used in the proof have
support $\{0,1,k,k+1\}$ (a set invariant under reversal), so the bound applies
to $\nu_I$ for $I=\{0,1,k,k+1\}$, and to all $\nu_t$ with $t\geq k+2$: the
corresponding distance between hyperplanes is about $b^{-2/3}$, which is much
smaller than the distance $b^{-1/2}$ of MWC generators.

Table~\ref{tab:bounds} compares the bounds with the largest values of $f_{k+2}$
found by our search (see Section~\ref{sec:search}): the best multipliers are
within $1.5\%$ of the bounds, which are thus essentially tight.

\begin{table}
\caption{\label{tab:bounds}Upper bounds on $f_{k+2}$ from
Propositions~\ref{prop:twodim} (MWC) and~\ref{prop:threedim} (GMWC with
$k\geq2$), maximized over the range of multipliers we consider, versus the
maximum value found in our search, and relative gap (bound divided by maximum,
minus one).}
\renewcommand{\arraystretch}{1.1}
\begin{tabular}{llrrr}
Generator & & \multicolumn{1}{c}{Best found} & \multicolumn{1}{c}{Bound} & \multicolumn{1}{c}{Gap}\\
\hline
\mwc[128] & $f_{3}$ & $7.00\times 10^{-4}$ & $7.01\times 10^{-4}$ & $0.020\%$\\
\mwc[192] & $f_{4}$ & $1.64\times 10^{-5}$ & $1.64\times 10^{-5}$ & $0.002\%$\\
\mwc[256] & $f_{5}$ & $1.72\times 10^{-6}$ & $1.72\times 10^{-6}$ & $0.020\%$\\
\gmwc[256] & $f_{5}$ & $2.44\times 10^{-3}$ & $2.46\times 10^{-3}$ & $1.03\%$\\
\mwc[64/32] & $f_{3}$ & $0.0282$ & $0.0282$ & $0.22\%$\\
\mwc[128/32] & $f_{5}$ & $1.33\times 10^{-3}$ & $1.34\times 10^{-3}$ & $0.44\%$\\
\gmwc[128/32] & $f_{5}$ & $0.0468$ & $0.0474$ & $1.24\%$\\
\end{tabular}
\end{table}

\subsection{Beyond dimension \texorpdfstring{$k+2$}{k+2}}
\label{sec:beyond}

Table~\ref{tab:ranges} shows that structural limitations
are visible also beyond dimension $k+2$, and most of them are explained by
Corollary~\ref{cor:lacunary}: for instance, the bound gives $f_5\leq0.0123$ for
\mwc[192], whereas the observed maximum is $0.0118$. The remaining limitations
are explained by the following generalization of Propositions~\ref{prop:twodim}
and~\ref{prop:threedim} (see also~\cite[Proposition~2]{CoLLSCLCSRASG}):

\begin{proposition}
\label{prop:general}
Assume that $a$ is coprime with $a_0$ (and thus with $m$; this is always true
for MWC generators). Let $J$ be a set of at least two pairs
$(i,j)$ of nonnegative integers such that the positions $ik+j$ are distinct and
smaller than $t$, let $d=\max_{(i,j)\in J}i$, and let $\bm
v_J=(a^{d-i}a_0^ib^j)_{(i,j)\in J}$. Then,
\[
\nu_t\leq\gamma_{|J|-1}^{1/2}\bigl(\|\bm v_J\|/\gcd(\bm v_J)\bigr)^{1/(|J|-1)}.
\]
\end{proposition}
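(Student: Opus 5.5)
The plan is to exhibit a sublattice of $L^*_t(b)$, supported on the positions $\{ik+j\mid(i,j)\in J\}$, whose rank is $|J|-1$ and whose determinant equals $\|\bm v_J\|/\gcd(\bm v_J)$. The bound then follows from the definition of the Hermite constant, exactly as in Propositions~\ref{prop:twodim} and~\ref{prop:threedim}. Since $\nu_t$ is nonincreasing in $t$, and the positions are smaller than $t$, it is enough to work in dimension $t$ directly.

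The key identity is $b^k\equiv a_0a^{-1}\pmod m$. This holds because $ab^k\equiv a_0\pmod m$ and $a$ is invertible modulo $m$, since $\gcd(a,m)=\gcd(a,a_0)=1$. Hence
\[
a^d b^{ik+j}\equiv a^{d-i}a_0^ib^j\pmod m
\]
for every $(i,j)\in J$. Now take an integer vector $\bm y=(y_{ij})_{(i,j)\in J}$ and place $y_{ij}$ at position $ik+j$, with zeroes elsewhere, to obtain $\bm z\in\Z^t$. Then
\[
a^d\sum_{(i,j)\in J}y_{ij}b^{ik+j}\equiv\bm y\cdot\bm v_J\pmod m.
\]
Because $a^d$ is invertible modulo $m$, the vector $\bm z$ lies in $L^*_t(b)$ whenever $\bm y\cdot\bm v_J=0$. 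The integer vectors orthogonal to $\bm v_J$ are therefore, once embedded, a sublattice $\Lambda$ of $L^*_t(b)$. It has rank $|J|-1$, since $\bm v_J\neq0$ and $|J|\geq2$. The embedding is a coordinate injection, so it is an isometry and lengths are preserved.

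It remains to compute $\det\Lambda$. I would use the standard fact that for a primitive integer vector $\bm w$, the lattice $\bm w^\perp\cap\Z^n$ has determinant $\|\bm w\|$. Applying this to $\bm w=\bm v_J/\gcd(\bm v_J)$, which has the same orthogonal lattice as $\bm v_J$, gives $\det\Lambda=\|\bm v_J\|/\gcd(\bm v_J)$. The Hermite constant then yields a nonzero vector of $\Lambda$ of length at most $\gamma_{|J|-1}^{1/2}(\det\Lambda)^{1/(|J|-1)}$, and this vector is also in $L^*_t(b)$, which proves the claim.

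The main obstacle is the determinant identity. It is classical, and I would justify it by duality. The orthogonal lattice $\bm w^\perp\cap\Z^n$ is a primitive sublattice of $\Z^n$ with $\det=1$. For complementary primitive sublattices, the determinant of one equals the determinant of the projection of $\Z^n$ onto the orthogonal complement of the other. Here that projection is the one-dimensional lattice $\Z\bm w/\|\bm w\|^2$. This lattice has determinant $1/\|\bm w\|$, because $\bm w$ primitive implies that the values $\bm x\cdot\bm w$, for $\bm x\in\Z^n$, are all of $\Z$. Combining these gives $\det(\bm w^\perp\cap\Z^n)=\|\bm w\|$. Alternatively, one can cite Cassels or Schmidt for this fact.
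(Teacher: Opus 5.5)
Your proof is correct and follows the paper's argument step for step: the congruence $a^db^{ik+j}\equiv a^{d-i}a_0^ib^j\pmod m$, the embedding of the integer vectors orthogonal to $\bm v_J$ into $L^*_t(b)$ using the invertibility of $a$ modulo $m$, the determinant $\|\bm v_J\|/\gcd(\bm v_J)$ of that lattice (which the paper simply asserts), and the Hermite bound. One wording fix in your justification of the determinant: the relation to use is $1=\det\Z^n=\det(\bm w^\perp\cap\Z^n)\cdot\det\pi_{\R\bm w}(\Z^n)$, a product with the projection onto the orthogonal complement of $\bm w^\perp\cap\Z^n$ itself (not an equality with a projection onto the complement of ``the other'' lattice), and your computation $\det\pi_{\R\bm w}(\Z^n)=1/\|\bm w\|$ then gives $\|\bm w\|$ as claimed.
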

\begin{proof}
Since $ab^k\equiv a_0\pmod m$, we have $a^db^{ik+j}\equiv a^{d-i}a_0^ib^j\pmod m$.
Thus, if $\bm z\in\Z^J$ is orthogonal to $\bm v_J$, the vector of $\Z^t$ with
$z_{ij}$ in position $ik+j$, and zeroes elsewhere, belongs to $L^*_t(b)$, as
$a^d\sum_{(i,j)\in J}z_{ij}b^{ik+j}\equiv\bm z\cdot\bm v_J=0\pmod m$ and $a$ is
invertible modulo $m$. The integer vectors orthogonal to $\bm v_J$ form a lattice
of rank $|J|-1$ and determinant $\|\bm v_J\|/\gcd(\bm v_J)$, and we conclude as
in the proof of Proposition~\ref{prop:twodim}.
\end{proof}

For example, for $k\geq3$ the set $J=\{\,(i,j)\mid i\leq1,j\leq2\,\}$, with
positions $\{0,1,2,k,k+1,k+2\}$, gives $f_6\leq0.0496$ for \gmwc[256], against
an observed maximum of $0.046$, and for $k\geq2$ the set
$J=\{\,(i,j)\mid 0\leq j\leq i\leq 2\,\}$, with positions
$\{0,k,k+1,2k,2k+1,2k+2\}$, gives $f_7\leq0.287$ for \mwc[192], against an
observed maximum of $0.267$. For $k=1$ the positions of these families collide,
which is consistent with the absence of a structural limitation on $f_3$ for
GMWC generators with lag one (Table~\ref{tab:ranges}).

Table~\ref{tab:spacings} reports, for the multipliers suggested in
Section~\ref{sec:results}, the distance between hyperplanes in dimension $k+2$;
the worst three-dimensional projection with indices up to $24$, that is, the set
$I=\{0,i_2,i_3\}$ with $i_3\leq 24$ minimizing $\nu_I$, and the worst
four-dimensional projection with indices up to $12$, both computed exactly with
\texttt{fplll}; and the minimum of $\nu_t/m^{1/t}$ for $9\leq t\leq 16$,
which measures the quality of the lattice structure in higher dimensions
without using Hermite constants (for the densest lattices,
$\nu_t/m^{1/t}=\gamma_t^{1/2}>1$). In the table, the worst three-dimensional
projection of every MWC generator is exactly the one with indices $\{0,1,k+1\}$
of Corollary~\ref{cor:lacunary}; for the GMWC generators with $k\geq2$, the worst
four-dimensional projection is the one with indices $\{0,1,k,k+1\}$ of
Proposition~\ref{prop:threedim}, with a distance between hyperplanes of about
$b^{-2/3}$. The low values of
$\nu_t/m^{1/t}$ for \mwc[256] and \mwc[128/32] ($k=3$) are attained in dimension
$9=2k+3$, the smallest dimension containing the positions
$\{0,k,k+1,2k,2k+1,2k+2\}$ of the second family above.

\begin{table}
\caption{\label{tab:spacings}Distances between hyperplanes for the multipliers
suggested in Section~\ref{sec:results}: $\log_2 m$; $\log_2\nu_{k+2}$; the
indices $I$ and $\log_2\nu_I$ of the worst three-dimensional projection with
indices up to $24$ and of the worst four-dimensional projection with indices up
to $12$; and $\min_{9\leq t\leq 16}\nu_t/m^{1/t}$. Indices refer to outputs.}
\renewcommand{\arraystretch}{1.1}
\small
\begin{tabular}{lrrlrlrr}
& & & \multicolumn{2}{c}{Worst 3-D} & \multicolumn{2}{c}{Worst 4-D} & \\
\cmidrule(rl){4-5}\cmidrule(rl){6-7}
Generator & \multicolumn{1}{c}{$\log_2 m$} & \multicolumn{1}{c}{$\log_2\nu_{k+2}$} & \multicolumn{1}{c}{$I$} & \multicolumn{1}{c}{$\log_2\nu_I$} & \multicolumn{1}{c}{$I$} & \multicolumn{1}{c}{$\log_2\nu_I$} & \multicolumn{1}{c}{$\min\nu_t/m^{1/t}$}\\
\hline
\mwc[128] & $128$ & $32.28$ & $\{0,1,2\}$ & $32.28$ & $\{0,3,9,12\}$ & $29.60$ & $0.76$\\
\mwc[192] & $192$ & $32.31$ & $\{0,1,3\}$ & $32.31$ & $\{0,1,2,3\}$ & $32.31$ & $0.74$\\
\mwc[256] & $256$ & $32.29$ & $\{0,1,4\}$ & $32.29$ & $\{0,1,2,4\}$ & $32.29$ & $0.10$\\
\gmwc[128] & $128$ & $42.73$ & $\{0,12,17\}$ & $38.99$ & $\{0,4,7,12\}$ & $28.67$ & $0.83$\\
\gmwc[256] & $256$ & $42.70$ & $\{0,1,4\}$ & $55.20$ & $\{0,1,3,4\}$ & $42.70$ & $0.81$\\
\mwc[64/32] & $64$ & $16.32$ & $\{0,1,2\}$ & $16.32$ & $\{0,4,9,10\}$ & $13.86$ & $0.77$\\
\mwc[128/32] & $128$ & $16.29$ & $\{0,1,4\}$ & $16.29$ & $\{0,1,2,4\}$ & $16.29$ & $0.19$\\
\gmwc[64/32] & $64$ & $21.43$ & $\{0,10,11\}$ & $19.08$ & $\{0,2,5,12\}$ & $14.09$ & $0.88$\\
\gmwc[128/32] & $128$ & $21.40$ & $\{0,1,4\}$ & $23.68$ & $\{0,1,3,4\}$ & $21.40$ & $0.97$\\
\end{tabular}
\end{table}

\subsection{Detecting lattice structure}
\label{sec:detect}

For $b=2^{32}$ the defect of Corollary~\ref{cor:lacunary} is easily detectable: Thomas already observed that
the multipliers of MWC generators with $b=2^{32}$ and $k=1$ give a poor
distribution of successive triples, and that such generators fail tests on
triples.\footnote{\label{fn:thomas}\url{http://cas.ee.ic.ac.uk/people/dt10/research/rngs-gpu-mwc64x.html} (2011).} A textbook three-dimensional
birthday-spacings test~\cite{KnuACPII,LESPBSTCFRNG} using the upper $20$ bits
of three successive outputs, with $10$ repetitions of $2^{21}$ points (a few
seconds of computation), finds $1\,029$ collisions for
\mwc[64/32] (with the multiplier suggested in Section~\ref{sec:results}),
whereas $20$ are expected.
On \mwc[128/32] the same test finds the expected number of collisions on
successive triples, but $1\,088$ on the lacunary triples
$(x_n,x_{n+1},x_{n+4})$, whereas \gmwc[64/32] and \gmwc[128/32] pass the test on
both kinds of triples.

With a $64$-bit modulus, however, lattice structure is detectable quickly also
in generators without structural defects: on the triples
$(x_n,x_{n+10},x_{n+11})$, the worst three-dimensional projection of
\gmwc[64/32] in Table~\ref{tab:spacings}, the same test using the upper $22$
bits of each output and $2^{23}$ points per repetition (less than half a minute
of computation) finds $91$ collisions, whereas $20$ are expected.

For $b=2^{64}$ the defect of \mwc[128] is much harder to detect. For the
\mwc[128] multiplier of Table~\ref{tab:best-mwc}, we have
$\nu_3\approx 2^{32.28}$, so successive planes are about $13$ times farther apart than the side of a cell of a
three-dimensional test using the upper $36$ bits of each output (recall that
outputs differ from the points of the associated MCG by less than $2^{-64}$).
Indeed, a three-dimensional birthday-spacings test~\cite{VigCB} on the upper $36$ bits of
$1.28\times10^{11}$ triples of successive outputs (about $3$\,TB) finds $81$
collisions, whereas $1.62$ are expected. A test of this size must
process terabytes of output independently of the available memory, which the
implementation of~\cite{VigCB} does by trading memory for time, and must use
cell numbers of more than $64$ bits: for example, the birthday-spacings test of
TestU01~\cite{LESTU01} accepts at most $2^{64}$ cells, that is, about $21$ bits
per coordinate in three dimensions, and thus cannot see planes at distance
about $2^{-32}$.

By Corollary~\ref{cor:lacunary}, the lacunary triples
$(x_n,x_{n+1},x_{n+k+1})$ of \mwc[192] and \mwc[256] lie in planes at
essentially the same distance (Table~\ref{tab:spacings}), and indeed the
same test finds $81$ collisions on the triples $(x_n,x_{n+1},x_{n+3})$
of \mwc[192] (about $4$\,TB of output), and $78$ on the triples
$(x_n,x_{n+1},x_{n+4})$ of \mwc[256] (about $5$\,TB). The $(k+2)$-tuples of
successive outputs of \mwc[192] and \mwc[256] lie in families of hyperplanes at
a similar distance, too, but a birthday-spacings test with the same resolution
would need about $2^{12}$ and $2^{24}$ times more points,
respectively.\footnote{\label{fn:scaling}The expected number of collisions is proportional to
$n^3$ divided by the number of cells, where $n$ is the number of points; using
$36$ bits per coordinate in dimension $k+2$ multiplies the number of cells by
$2^{36(k-1)}$, and thus $n$ by $2^{12(k-1)}$.}

Distances in different dimensions can be compared through the cost of such a
test: since the number of points grows as the cube root of the number of cells
(footnote~\ref{fn:scaling}), at the same relative resolution a test on planes at
distance $d$ in dimension $t$ needs a number of points proportional to
$d^{-t/3}$. For example, the hyperplanes of the worst four-dimensional
projection of \gmwc[128] in Table~\ref{tab:spacings} are farther apart than the
planes of the successive triples of \mwc[128], but a test on either of the worst
projections of \gmwc[128] needs at least $60$ times more points than a test on
the successive triples of \mwc[128].

Note that the congruence defining the planes holds modulo $b$, up to a small
error, so the defect is visible also in the lower bits: on the lowest bits,
the tests above find essentially the same numbers of collisions as on the upper
bits.

\section{Scrambling}
\label{sec:x}

The lattice structure studied in Section~\ref{sec:spectral} is a property of the digits
$x_n$, which are, up to an error smaller than $1/b$, the points of the
associated MCG. Thomas's MWC64X generator (footnote~\ref{fn:thomas}), an MWC
generator with $b=2^{32}$ and $k=1$ designed for GPUs, returns instead
$x_n\oplus c_n$, the bitwise exclusive or of the digit and of the carry computed
together with it by~(\ref{eq:rec}), that is, of the lower and upper halves of
$\tau_n=ax_{n-k}+c_{n-1}$; note that $c_n$ and $x_n$ are the two most
significant base-$b$ digits of the state integer of the state
$(x_{n-k+1},\dots,x_n;c_n)$, and for $k=1$ they are the whole state.

For GMWC generators, $x_n\oplus c_n$ would be biased: by~(\ref{eq:rec}),
$c_n=\lfloor\tau_n/b\rfloor+\lceil ex_n/b\rceil$, so the range of the
carry is shifted by about $ex_n/b$, the upper bits of $x_n$ and $c_n$ are
positively correlated, and the most significant bit of $x_n\oplus c_n$ is one
with probability about $1/2-e/(4b)$. The bias is negligible for \gmwc[128]
(for our multiplier, $e<2^{31}$), but it is about $2\times10^{-4}$ for \gmwc[256], and for
\gmwc[64/32] PractRand detects it after $4$\,GB of output. We thus define a
scrambler that generalizes Thomas's, returning
$x_n\oplus\lfloor\tau_n/b\rfloor$: since $\lfloor\tau_n/b\rfloor=c_n$ for
MWC generators, it degenerates to Thomas's scrambler, so we simply call it
\emph{X}, and we denote scrambled generators by the suffix X (e.g.,
\mwc[192X], \gmwc[128X], and \mwc[64X/32]). For GMWC generators,
$\tau_n=c_nb-ex_n$ is the state integer when $k=1$, and differs by at most $e$
from the integer formed by its two most significant base-$b$ digits otherwise;
since $x_n=A\tau_n\bmod b$ depends only on $\tau_n\bmod b$, X is a function of
$\tau_n$ alone, as for MWC generators. Since
$\lfloor\tau_n/b\rfloor\leq c_n$ and the carry fits in a $w$-bit word
(Section~\ref{sec:seeding}), X costs a single instruction, which is not on the
loop-carried dependency chain (Section~\ref{sec:speed}; see
Figure~\ref{fig:code}).

Thomas motivates his scrambler by noting that the carry changes completely at each step,
and that the exclusive or is an operation of a different kind than those of the
recurrence. More specifically, $0\leq x_{n-k}-\lfloor\tau_n/b\rfloor\leq
b-a$, so when $a$ is close to $b$ the upper bits of $\lfloor\tau_n/b\rfloor$
(for an MWC generator, $c_n$) are, up to a borrow, those of $x_{n-k}$ (for our
multipliers, $a\geq b-b/2^8$, so at least the upper eight bits): X combines
$x_{n-k}$ with $x_n$ by an operation that does not preserve the linear relations with
small integer coefficients that define the families of hyperplanes of the
spectral test. For the same reason, X is useful only if the carry has about as
many bits as the digit: the carry of \texttt{mwc59} (Section~\ref{sec:small})
has $27$ bits, so X would leave the upper five bits of its $32$-bit digit
unchanged.

Empirically, X
removes from all bits every defect we could detect. For $b=2^{32}$, with X the test of Section~\ref{sec:detect} finds
the expected number of collisions, on the upper and on the lower bits, on the triples on
which \mwc[64/32] and \mwc[128/32] fail, and X hides also the lattice structure
due to the $64$-bit modulus of \gmwc[64/32]. More generally, on the upper and on the lower bits of the
outputs with indices $\{0,i,j\}$, $0<i<j\leq8$, of these three generators and
of \gmwc[128/32] with X, all $224$ counts are within three standard deviations
of the expected value.
For $b=2^{64}$, on the upper and on the lowest $36$ bits of the triples on which
\mwc[128], \mwc[192], and \mwc[256] fail, the test of Section~\ref{sec:detect}
finds with X at most $4$ collisions, whereas $1.62$ are expected.

As for every scrambler, it is a good idea to choose the best possible
underlying generator, and we still select multipliers using the spectral test,
in the same way in which the scrambled linear generators of Blackman and
Vigna~\cite{BlVSLPNG} use underlying generators with good $\F_2$-linear
properties.

\section{Selecting multipliers}
\label{sec:criterion}

\subsection{Classical criteria}

When choosing a multiplier for an MCG, the figures of merit $f_2$, $f_3$,
\dots, $f_T$ are usually summarized by a single criterion. The most common
choice is the worst-case figure of merit $M_T=\min_{2\leq t\leq T}f_t$, which
was used by Fishman and Moore~\cite{FiMEAMCRNG} and later by L'Ecuyer and
coauthors~\cite{LBCSGMRRNG,LEcTLCGDSGLS}, who also considered worst cases over
projections with lacunary indices~\cite{LeLVRLR}.

Steele and Vigna~\cite[Section~9]{StVCESGMCPNGjr} observed that, when selecting
among many candidates, the minimum is a questionable choice even for ordinary
congruential generators, and gave the following example of two multipliers for
a $64$-bit LCG, with their figures of merit $f_2$, $f_3$, \dots, $f_8$:
\begin{center}
\small
\begin{tabular}{lrrrrrrr}
\texttt{0xe2e19bb27190da6d} & $0.791216$ & $0.771300$ & $0.791569$ & $0.777944$ & $0.773526$ & $0.777463$ & $0.766073$\\
\texttt{0xe73d20db8e96d2cd} & $0.941271$ & $0.883251$ & $0.854317$ & $0.825078$ & $0.803654$ & $0.781546$ & $0.766043$
\end{tabular}
\end{center}
The minimum chooses the first multiplier, whose $f_8$ is larger in the fifth
decimal place, although the second multiplier is much better in all other
dimensions. Steele and Vigna thus consider as equivalent all multipliers whose
minimum figure of merit is in the top $0.1\%$ of the candidates, and among them
choose the one with the best \emph{harmonic score}
\[
H_T=\frac1{\sum_{t=2}^T 1/(t-1)}\sum_{t=2}^T\frac{f_t}{t-1},
\]
which gives more weight to low dimensions.

For (G)MWC generators both criteria break down. Figures of merit in dimensions
up to $k+1$ are essentially fixed (Proposition~\ref{prop:low}); for MWC
generators, and for GMWC generators with $k\geq 2$, the figure of merit in
dimension $k+2$ (and in a few more dimensions) is structurally bounded by a very
small quantity (Propositions~\ref{prop:twodim} and~\ref{prop:threedim}). Thus,
even ignoring the fixed dimensions, maximizing the worst-case figure of merit
means optimizing a single, structurally bad dimension, with no control on the
others. The harmonic score has the opposite problem: the figures of merit in the
bad dimensions are tiny, so they contribute nothing to the score.

The problem is that the absolute values of the figures of merit in different
dimensions are not comparable: for \mwc[128], a value of $f_3$ equal to
$6\times 10^{-4}$ is excellent, whereas a value of $f_6$ equal to $0.6$ is
mediocre.

\subsection{The weighted geometric mean}

Ratios of figures of merit, however, are comparable: a candidate whose $f_3$ is $10\%$ larger than
that of another candidate is better in dimension three to the same extent as a
candidate whose $f_6$ is $10\%$ larger is better in dimension six. We thus look
for a way of ranking candidates that does not depend on the normalization of
each figure of merit (by Hermite constants, by the structural bounds of
Section~\ref{sec:spectral}, or by the values observed on a set of candidates),
and in which the comparison between two candidates depends only on their
figures of merit. These requirements essentially determine the ranking, by a
classical characterization from social choice theory, where the same invariance
is required for utilities measured on ratio scales (Fleming and
Wallace~\cite{FlWHNLSCWSBR} use the same argument to summarize normalized
benchmark results by their geometric mean):

\begin{proposition}[{\cite[Theorems~4.2 and~4.17]{dAGSWFIC}}]
\label{prop:geomean}
Let $\succeq$ be a total preorder on $(0\..\infty)^n$ such that:
\begin{enumerate}
\item for every $\bm y$, the sets $\{\,\bm x\mid\bm x\succeq\bm y\,\}$ and
$\{\,\bm x\mid\bm y\succeq\bm x\,\}$ are closed;
\item if $x_t>y_t$ for every $t$, then $\bm x\succ\bm y$;
\item if $\bm x\succeq\bm y$, then $(c_1x_1,\dots,c_nx_n)\succeq(c_1y_1,\dots,c_ny_n)$
for every $\bm c\in(0\..\infty)^n$.
\end{enumerate}
Then, there are $w_1$, \dots, $w_n\geq0$, not all zero, such that $\bm
x\succeq\bm y$ if and only if $\prod_tx_t^{w_t}\geq\prod_ty_t^{w_t}$.
\end{proposition}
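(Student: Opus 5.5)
The plan is to move to logarithmic coordinates and recognise the hypotheses as those of a classical ``ratio-scale'' social welfare ordering. Define $\phi\colon\R^n\to(0\..\infty)^n$ by $\phi(\bm s)=(e^{s_1},\dots,e^{s_n})$. Then define $\bm s\succeq'\bm r$ if and only if $\phi(\bm s)\succeq\phi(\bm r)$. Since $\phi$ is a homeomorphism, $\succeq'$ is a total preorder on $\R^n$ with closed upper and lower contour sets. Hypothesis~(2) becomes strict monotonicity under strict componentwise domination. Hypothesis~(3), with $c_t=e^{\delta_t}$, becomes translation invariance: $\bm s\succeq'\bm r$ implies $\bm s+\bm\delta\succeq'\bm r+\bm\delta$ for every $\bm\delta\in\R^n$. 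It therefore suffices to show that a continuous, strictly monotone, translation-invariant total preorder on $\R^n$ is represented by a linear functional $\bm w\cdot\bm s$ with $\bm w\geq0$, $\bm w\neq0$. Exponentiating then gives $\prod_t x_t^{w_t}$.

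First I would show that $\succeq'$ is determined by differences. Translation invariance applied with $\bm\delta=-\bm r$ and with $\bm\delta=\bm r$ gives $\bm s\succeq'\bm r$ if and only if $\bm s-\bm r\succeq'\bm 0$. So the whole preorder is encoded by the set $P=\{\,\bm d\mid\bm d\succeq'\bm 0\,\}$, which is closed by continuity. Totality gives $P\cup(-P)=\R^n$. Transitivity combined with translation gives $P+P\subseteq P$: if $\bm d,\bm e\in P$, then $\bm d+\bm e\succeq'\bm e\succeq'\bm 0$. Next I would show that $P$ is a cone, i.e.\ $\lambda P\subseteq P$ for $\lambda>0$. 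Closure under addition gives this for positive integers $\lambda$. For rational $\lambda=p/q$, suppose $\bm d\in P$ but $\bm d/q\notin P$. Then $-\bm d/q\succ'\bm 0$, and adding it $q$ times gives $-\bm d\succ'\bm 0$ (a strict version of additivity holds by the same argument). This contradicts $\bm d\succeq'\bm 0$. Closedness of $P$ then extends the property to all real $\lambda>0$.

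Hence $P$ is a closed convex cone with $P\cup(-P)=\R^n$. By hypothesis~(2) it contains the open positive orthant and misses the open negative orthant, so $P\neq\R^n$. The key step, and the one I expect to be the main obstacle, is showing that such a cone is a closed half-space $\{\,\bm d\mid\bm w\cdot\bm d\geq0\,\}$. I would first argue that $P$ has nonempty interior but is not the whole space. I would then take a boundary point and apply the supporting hyperplane theorem to get $\bm w\neq0$ with $\bm w\cdot\bm d\geq0$ on $P$. For the reverse inclusion, note that if $\bm w\cdot\bm d>0$ and $\bm d\notin P$, then $-\bm d\in P$, whence $\bm w\cdot(-\bm d)\geq0$, a contradiction. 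Closedness handles the points with $\bm w\cdot\bm d=0$. Since $P$ contains the positive orthant, $\bm w\geq0$.

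Finally, I would check that the indifference relation matches, i.e.\ that $\bm d\succeq'\bm 0$ exactly when $\bm w\cdot\bm d\geq0$, including the boundary cases. This uses the fact that $-P$ is the corresponding opposite half-space, so $\bm d\sim'\bm 0$ if and only if $\bm w\cdot\bm d=0$. Translating back by $\phi$ gives $\bm x\succeq\bm y$ if and only if $\sum_t w_t\log x_t\geq\sum_t w_t\log y_t$, which is the claim.
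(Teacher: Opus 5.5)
Your proof is correct. Its first step is exactly the paper's: taking logarithms turns hypothesis~(3) into translation invariance. The routes then diverge. The paper stops there and invokes d'Aspremont and Gevers's characterization of weighted utilitarianism, whereas you prove that characterization yourself. You encode the preorder by the set $P$ of differences $\bm d\succeq'\bm 0$, show it is a closed convex cone with $P\cup(-P)=\R^n$ and $P\neq\R^n$, and conclude by a supporting-hyperplane argument that it is a closed half-space $\{\,\bm d\mid\bm w\cdot\bm d\geq0\,\}$.

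The one step you leave implicit is that the supporting hyperplane passes through the origin. This holds because $P$ is a cone containing $\bm 0$. Alternatively, take $\bm 0$ itself as the boundary point, which it is since $P\neq\R^n$.

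The two routes buy different things:
\begin{itemize}
\item The paper's one-line reduction places the ranking criterion within the social-choice literature.
\item Your argument is self-contained and shows which hypotheses do the work. Under translation invariance every upper contour set is $\bm y+P$ and every lower one is $\bm y-P$, so hypothesis~(1) reduces to closedness of $P$ alone. Hypothesis~(2) is used only to get $P\neq\R^n$, $\bm w\neq\bm 0$, and $\bm w\geq\bm 0$.
\end{itemize}

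Your final check on indifference is redundant: $\bm d\succeq'\bm 0\iff\bm w\cdot\bm d\geq0$ together with $\bm s\succeq'\bm r\iff\bm s-\bm r\succeq'\bm 0$ already gives the stated equivalence.
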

\begin{proof}
Taking logarithms, rescalings become translations, and the statement
becomes the characterization of weighted utilitarianism of d'Aspremont and
Gevers~\cite[Theorem~4.2]{dAGSWFIC}.
\end{proof}

Proposition~\ref{prop:geomean} leaves us free to choose the weights. We use the
weights $1/(t-1)$ of the harmonic score $H_T$, which give more importance to low
dimensions, and rank candidates by the \emph{weighted geometric mean}
\[
G_T=\Bigl(\prod_{t=2}^Tf_t^{1/(t-1)}\Bigr)^{1/\sum_{t=2}^T 1/(t-1)},
\]
that is, by the harmonic score of the logarithms of the figures of merit. We use
$T=8$, as is common in the literature~\cite{LEcTLCGDSGLS,StVCESGMCPNGjr}. Since the ranking does not depend on the normalization,
Hermite constants play no role, and in dimensions in which they are not known
one can use $\nu_t/m^{1/t}$ in place of $f_t$.

The harmonic score itself, instead, does not satisfy the third condition of
Proposition~\ref{prop:geomean}: its choices depend on the normalization of the
figures of merit. For example, dividing each figure of merit by its maximum over
the candidates, rather than by $\nu^*_t(m)$, changes the choice of $H_8$ in six
of our nine searches.

The weighted geometric mean has a natural interpretation:
$-\log_2f_t=\log_2\nu^*_t(m)-\log_2\nu_t$ is the number of bits of resolution
lost in dimension~$t$ with respect to an optimal lattice with the same density,
so maximizing $G_T$ minimizes a weighted total of the number of bits lost. In
this view, $M_T$ minimizes the largest loss, which for MWC generators is
determined by their structure (\mwc[128] loses more than $10$ bits in dimension
three, whatever the multiplier), whereas $H_T$ ignores the losses in dimensions
in which $f_t$ is very small.

The candidate maximizing $G_T$ is Pareto optimal, as $G_T$ is strictly
increasing in each figure of merit. Figures of merit essentially fixed by the
structure of the generator multiply $G_T$ by essentially the same factor for all
candidates, so they need not be excluded (on our data, excluding them does not
change the choice). Unlike $M_T$, $G_T$ compensates a loss in a dimension with
gains in others, but only in relative terms: on our data, every figure of merit
of the chosen candidates is at least $72\%$ of the best value found in its
dimension ($82\%$ except for \gmwc[128], see Section~\ref{sec:comparison}). The
choice depends little on the weights: equal weights would change it only for
\gmwc[128], and weights $t$ (i.e., equal weights on the logarithms of Knuth's
figures of merit $\mu_t\propto f_t^t$~\cite{KnuACPII}) only for \gmwc[128] and
\gmwc[64/32].

The weighted geometric means of the five best candidates of each search differ
by less than $1.3\%$, so they should be considered essentially equivalent (in
Section~\ref{sec:results} we report all of them).

\subsection{Comparison}
\label{sec:comparison}

On our searches for $b=2^{64}$, we compared the choices of $G_8$ with those of
$M_8$ (computed on dimensions $k+2$ to $8$), of $H_8$ (computed, as
in~\cite{StVCESGMCPNGjr}, on dimensions $2$ to $8$), and of the approach of
Steele and Vigna (SV: the best $H_8$ among the top $0.1\%$ candidates by $M_8$).
When some dimensions are structurally bounded, $M_8$ optimizes just the bounded
dimension: for \mwc[128], its choice has the largest $f_3$, but $f_8=0.407$,
against $0.728$ for the choice of $G_8$. The multiplier of Thomas's MWC64X
(footnote~\ref{fn:thomas}) is an instance with $b=2^{32}$: among our \mwc[64/32]
candidates, it is exactly the choice of $M_8$, as it has the largest $f_3$, but
$f_5=0.101$, and
it ranks $15\,741$st among the $22\,982$ candidates by $G_8$. SV avoids the
worst choices, but for MWC generators the top $0.1\%$ by $M_8$ contains just the
candidates with the best $f_{k+2}$, and the chosen candidate may be mediocre in
some other dimension (for \mwc[192], $f_7=0.204$, against $0.256$ for the choice
of $G_8$). $H_8$ often chooses the same
candidate as $G_8$, or an essentially equivalent one, but it can choose
candidates that are poor in the structurally bounded dimensions: for
\gmwc[256], its choice has $f_5=1.55\times10^{-3}$, close to the median value
$1.52\times10^{-3}$ over all candidates, against $2.24\times10^{-3}$ for $G_8$.

When no dimension is structurally bounded, as for \gmwc[128], $M_8$ is a sensible
criterion: all figures of merit of its choice in dimensions $3$ to $8$ are
between $0.77$ and $0.80$,
whereas the choice of $G_8$ (Table~\ref{tab:best-gmwc}) has much better $f_3$
and $f_4$, but $f_8=0.613$. The weights of $G_8$ favor
the latter, but the choice between the two candidates is a matter of taste.

\section{The search}
\label{sec:search}

We want the modulus $m$ to be a safe prime, so that the period is $(m-1)/2$,
and $a$ to be as close as possible to $b$, so that the period and the figure of
merit in dimension $k+1$, which is approximately
$(a/b)^{k/(k+1)}/\gamma_{k+1}^{1/2}$, are as large as possible. We thus consider multipliers $a$ whose upper $8$ bits are all ones,
that is, $a\in[b-b/2^8\..b)$. For GMWC generators we have the additional
parameter $e$, which must be odd and satisfy $a+e<b$, so that the carry fits a
single word (see Section~\ref{sec:seeding}); thus, $e<2^{w-8}$. Optionally, $e$ can be restricted to fewer bits, but it should be larger than
$b^{1/3}$, to avoid the structural bound of Proposition~\ref{prop:twodim} on
$f_{k+2}$ (for $k\geq 2$, the weaker bound of Proposition~\ref{prop:threedim},
essentially independent of $e$, remains). We
restricted $e$ to $32$ bits for \gmwc[128]; in the other GMWC searches, $e$ has
the maximum allowed size (Table~\ref{tab:search}).

For \mwc[64/32] and \mwc[128/32] we enumerate exhaustively all multipliers in
$[2^{32}-2^{24}\..2^{32})$. In all other cases, we generate random candidate
multipliers using a \xoshiropp[256] generator~\cite{BlVSLPNG}, setting the upper
$8$ bits; in the GMWC case, we generate $e$ in the same way, masking it to the
desired number of bits and setting the lowest bit; pairs with $a+e\geq b$ are
discarded. We test whether $m$ and $(m-1)/2$ are prime using the Miller--Rabin
test of the NTL library~\cite{ShoNTL}. Candidates passing the test are then
subjected to the spectral test in dimensions $2$ to $8$ using
LatticeTester~\cite{LEcLTG}, which computes the shortest vectors of the dual lattice exactly (in the
Euclidean norm) by branch-and-bound after lattice
reduction~\cite{LECILSTMRLRNG}. As a sanity check, we recomputed the figures of
merit of all the candidates appearing in our tables using \texttt{fplll},
obtaining identical results to the six significant digits printed by
LatticeTester; our tables use the exact values, to avoid rounding twice. We
also proved the primality of $m$ and $(m-1)/2$ for the five best candidates of
each search (using the APR-CL test of PARI/GP through Sage), so the period of
the generators we suggest is exactly $(m-1)/2$.

\begin{table}
\caption{\label{tab:search}Summary of our searches: base, lag, number of
candidates (safe-prime moduli on which the spectral test has been run), maximum
number of bits of $e$ (GMWC only).}
\renewcommand{\arraystretch}{1.1}
\begin{tabular}{lcrrr}
Generator & $b$ & \multicolumn{1}{c}{$k$} & \multicolumn{1}{c}{Candidates} & \multicolumn{1}{c}{Bits of $e$}\\
\hline
\mwc[128] & $2^{64}$ & $1$ & $11\,074\,098$ & \multicolumn{1}{c}{---}\\
\mwc[192] & $2^{64}$ & $2$ & $7\,009\,216$ & \multicolumn{1}{c}{---}\\
\mwc[256] & $2^{64}$ & $3$ & $1\,172\,795$ & \multicolumn{1}{c}{---}\\
\gmwc[128] & $2^{64}$ & $1$ & $10\,941\,185$ & $32$\\
\gmwc[256] & $2^{64}$ & $3$ & $1\,317\,821$ & $56$\\
\mwc[64/32] & $2^{32}$ & $1$ & $22\,982$ & \multicolumn{1}{c}{---}\\
\mwc[128/32] & $2^{32}$ & $3$ & $5\,601$ & \multicolumn{1}{c}{---}\\
\gmwc[64/32] & $2^{32}$ & $1$ & $10\,659\,050$ & $24$\\
\gmwc[128/32] & $2^{32}$ & $3$ & $10\,340\,733$ & $24$\\
\end{tabular}
\end{table}

\section{Suggested multipliers}
\label{sec:results}

Tables~\ref{tab:best-mwc}, \ref{tab:best-gmwc}, and~\ref{tab:best-32} report
the five best candidates by $G_8$ for each of our searches, together
with their figures of merit in dimensions $k+2$ to $8$ (in lower dimensions
figures of merit are essentially fixed, see Table~\ref{tab:ranges}); the top
candidates should be considered essentially equivalent (see
Section~\ref{sec:criterion}). For GMWC generators, the constant
$A=(-e)^{-1}\bmod b$ needed by the implementation can be computed, for instance,
by five steps of the Newton iteration $y\leftarrow y(2+ey)$ starting from
$y=-e$.

\begin{table}
\caption{\label{tab:best-mwc}The five best multipliers by $G_8$ for \mwc[128],
\mwc[192], and \mwc[256], with their figures of merit in dimensions $k+2$ to
$8$.}
\renewcommand{\arraystretch}{1.1}
\small
\begin{tabular}{lrrrrrr}
\multicolumn{1}{c}{$a$} & \multicolumn{1}{c}{$f_3$} & \multicolumn{1}{c}{$f_4$} & \multicolumn{1}{c}{$f_5$} & \multicolumn{1}{c}{$f_6$} & \multicolumn{1}{c}{$f_7$} & \multicolumn{1}{c}{$f_8$}\\
\hline
\multicolumn{7}{c}{\mwc[128]}\\
\hline
\texttt{0xff6da236274e252e} & $6.64\times 10^{-4}$ & $0.9158$ & $0.8345$ & $0.8359$ & $0.7713$ & $0.7284$\\
\texttt{0xff9d709b20cf6d2f} & $6.97\times 10^{-4}$ & $0.9385$ & $0.8714$ & $0.7453$ & $0.6993$ & $0.6993$\\
\texttt{0xff733b5223b28db3} & $6.94\times 10^{-4}$ & $0.9274$ & $0.8320$ & $0.7774$ & $0.7232$ & $0.7095$\\
\texttt{0xff94d7cb4e4860fd} & $6.92\times 10^{-4}$ & $0.9167$ & $0.8862$ & $0.7791$ & $0.7314$ & $0.6454$\\
\texttt{0xff653dd2f96f56c4} & $6.98\times 10^{-4}$ & $0.9382$ & $0.7705$ & $0.7670$ & $0.7661$ & $0.7246$\\
\hline
\end{tabular}

\medskip

\begin{tabular}{lrrrrr}
\multicolumn{1}{c}{$a$} & \multicolumn{1}{c}{$f_4$} & \multicolumn{1}{c}{$f_5$} & \multicolumn{1}{c}{$f_6$} & \multicolumn{1}{c}{$f_7$} & \multicolumn{1}{c}{$f_8$}\\
\hline
\multicolumn{6}{c}{\mwc[192]}\\
\hline
\texttt{0xff4e24dd418f5433} & $1.60\times 10^{-5}$ & $0.0112$ & $0.9007$ & $0.2563$ & $0.7936$\\
\texttt{0xff34e6efba450121} & $1.56\times 10^{-5}$ & $0.0115$ & $0.9278$ & $0.2476$ & $0.7437$\\
\texttt{0xff149c8cb7b203b8} & $1.57\times 10^{-5}$ & $0.0115$ & $0.9219$ & $0.2431$ & $0.7542$\\
\texttt{0xffc5fcc4f63c15e7} & $1.57\times 10^{-5}$ & $0.0117$ & $0.9395$ & $0.2333$ & $0.7424$\\
\texttt{0xff52883eb6de17c1} & $1.55\times 10^{-5}$ & $0.0116$ & $0.9310$ & $0.2455$ & $0.7409$\\
\hline
\end{tabular}

\medskip

\begin{tabular}{lrrrr}
\multicolumn{1}{c}{$a$} & \multicolumn{1}{c}{$f_5$} & \multicolumn{1}{c}{$f_6$} & \multicolumn{1}{c}{$f_7$} & \multicolumn{1}{c}{$f_8$}\\
\hline
\multicolumn{5}{c}{\mwc[256]}\\
\hline
\texttt{0xff32e7e2bdcb0a33} & $1.65\times 10^{-6}$ & $5.83\times 10^{-4}$ & $0.0382$ & $0.8641$\\
\texttt{0xff0e07044e7aeb95} & $1.65\times 10^{-6}$ & $5.81\times 10^{-4}$ & $0.0381$ & $0.8602$\\
\texttt{0xffa68b2009f3b90f} & $1.65\times 10^{-6}$ & $5.82\times 10^{-4}$ & $0.0381$ & $0.8631$\\
\texttt{0xff17f83a1f4af883} & $1.64\times 10^{-6}$ & $5.81\times 10^{-4}$ & $0.0381$ & $0.8615$\\
\texttt{0xff3f195afd6226e4} & $1.64\times 10^{-6}$ & $5.81\times 10^{-4}$ & $0.0381$ & $0.8615$\\
\hline
\end{tabular}
\end{table}

\begin{table}
\caption{\label{tab:best-gmwc}The five best pairs $(a,e)$ by $G_8$ for
\gmwc[128] and \gmwc[256], with their figures of merit in dimensions $k+2$ to
$8$.}
\renewcommand{\arraystretch}{1.1}
\small
\begin{tabular}{llrrrrrr}
\multicolumn{1}{c}{$a$} & \multicolumn{1}{c}{$e$} & \multicolumn{1}{c}{$f_3$} & \multicolumn{1}{c}{$f_4$} & \multicolumn{1}{c}{$f_5$} & \multicolumn{1}{c}{$f_6$} & \multicolumn{1}{c}{$f_7$} & \multicolumn{1}{c}{$f_8$}\\
\hline
\multicolumn{8}{c}{\gmwc[128]}\\
\hline
\texttt{0xfff0cfb4ed6366fa} & \texttt{0x68935f57} & $0.9330$ & $0.9013$ & $0.7864$ & $0.7746$ & $0.7705$ & $0.6132$\\
\texttt{0xffd80a0afb781ae9} & \texttt{0xb48a3fab} & $0.9149$ & $0.8483$ & $0.7818$ & $0.8037$ & $0.7875$ & $0.6899$\\
\texttt{0xffd492eb05addaa4} & \texttt{0x2d1020d3} & $0.9168$ & $0.8344$ & $0.7746$ & $0.7593$ & $0.8048$ & $0.6959$\\
\texttt{0xffaec0a9a043fbfa} & \texttt{0xd01f08a3} & $0.9561$ & $0.8250$ & $0.7035$ & $0.8279$ & $0.8020$ & $0.6484$\\
\texttt{0xff21354728271635} & \texttt{0x907a82ab} & $0.9294$ & $0.8978$ & $0.8130$ & $0.7648$ & $0.6336$ & $0.6655$\\
\hline
\end{tabular}

\medskip

\begin{tabular}{llrrrr}
\multicolumn{1}{c}{$a$} & \multicolumn{1}{c}{$e$} & \multicolumn{1}{c}{$f_5$} & \multicolumn{1}{c}{$f_6$} & \multicolumn{1}{c}{$f_7$} & \multicolumn{1}{c}{$f_8$}\\
\hline
\multicolumn{6}{c}{\gmwc[256]}\\
\hline
\texttt{0xff76e600f25ae046} & \texttt{0x34dabaf938e45b} & $2.24\times 10^{-3}$ & $0.0414$ & $0.7912$ & $0.7590$\\
\texttt{0xff2eba2993c1d518} & \texttt{0xeb1eb4ffc63e7} & $2.27\times 10^{-3}$ & $0.0420$ & $0.7450$ & $0.7741$\\
\texttt{0xff43a827aa5cad3e} & \texttt{0x8946f0524490a7} & $2.39\times 10^{-3}$ & $0.0396$ & $0.8255$ & $0.6704$\\
\texttt{0xff5feedbf607f9c5} & \texttt{0x821b4720047a0f} & $2.37\times 10^{-3}$ & $0.0381$ & $0.7920$ & $0.7381$\\
\texttt{0xff01354317d8c2f4} & \texttt{0xd5bd05c69bb347} & $2.25\times 10^{-3}$ & $0.0447$ & $0.7127$ & $0.7206$\\
\hline
\end{tabular}
\end{table}

\begin{table}
\caption{\label{tab:best-32}The five best multipliers (or pairs) by $G_8$ for
$b=2^{32}$, with their figures of merit in dimensions $k+2$ to $8$.}
\renewcommand{\arraystretch}{1.1}
\small
\begin{tabular}{lrrrrrr}
\multicolumn{1}{c}{$a$} & \multicolumn{1}{c}{$f_3$} & \multicolumn{1}{c}{$f_4$} & \multicolumn{1}{c}{$f_5$} & \multicolumn{1}{c}{$f_6$} & \multicolumn{1}{c}{$f_7$} & \multicolumn{1}{c}{$f_8$}\\
\hline
\multicolumn{7}{c}{\mwc[64/32]}\\
\hline
\texttt{0xfff145f8} & $0.0275$ & $0.8957$ & $0.7406$ & $0.8239$ & $0.7191$ & $0.6889$\\
\texttt{0xff111cb4} & $0.0277$ & $0.9060$ & $0.8239$ & $0.7542$ & $0.7348$ & $0.5805$\\
\texttt{0xffdb8e77} & $0.0279$ & $0.9325$ & $0.7737$ & $0.7245$ & $0.6627$ & $0.6848$\\
\texttt{0xff874455} & $0.0269$ & $0.9075$ & $0.8163$ & $0.7375$ & $0.6871$ & $0.7019$\\
\texttt{0xffe09073} & $0.0263$ & $0.9098$ & $0.8261$ & $0.7969$ & $0.7158$ & $0.5697$\\
\hline
\end{tabular}

\medskip

\begin{tabular}{lrrrr}
\multicolumn{1}{c}{$a$} & \multicolumn{1}{c}{$f_5$} & \multicolumn{1}{c}{$f_6$} & \multicolumn{1}{c}{$f_7$} & \multicolumn{1}{c}{$f_8$}\\
\hline
\multicolumn{5}{c}{\mwc[128/32]}\\
\hline
\texttt{0xffcccc66} & $1.28\times 10^{-3}$ & $0.0233$ & $0.1847$ & $0.8566$\\
\texttt{0xff7df6e5} & $1.27\times 10^{-3}$ & $0.0233$ & $0.1845$ & $0.8563$\\
\texttt{0xff555f96} & $1.27\times 10^{-3}$ & $0.0233$ & $0.1844$ & $0.8558$\\
\texttt{0xff2512f8} & $1.26\times 10^{-3}$ & $0.0232$ & $0.1840$ & $0.8537$\\
\texttt{0xffafc94a} & $1.28\times 10^{-3}$ & $0.0231$ & $0.1835$ & $0.8508$\\
\hline
\end{tabular}

\medskip

\begin{tabular}{llrrrrrr}
\multicolumn{1}{c}{$a$} & \multicolumn{1}{c}{$e$} & \multicolumn{1}{c}{$f_3$} & \multicolumn{1}{c}{$f_4$} & \multicolumn{1}{c}{$f_5$} & \multicolumn{1}{c}{$f_6$} & \multicolumn{1}{c}{$f_7$} & \multicolumn{1}{c}{$f_8$}\\
\hline
\multicolumn{8}{c}{\gmwc[64/32]}\\
\hline
\texttt{0xff66966f} & \texttt{0x651e9f} & $0.9514$ & $0.8246$ & $0.8271$ & $0.7840$ & $0.7437$ & $0.7605$\\
\texttt{0xff4bce10} & \texttt{0x2ec607} & $0.9437$ & $0.8981$ & $0.7672$ & $0.7629$ & $0.7419$ & $0.7494$\\
\texttt{0xffb684f6} & \texttt{0x150f3f} & $0.9534$ & $0.8867$ & $0.8385$ & $0.6844$ & $0.7200$ & $0.6418$\\
\texttt{0xff3886d5} & \texttt{0x62e507} & $0.9569$ & $0.7888$ & $0.8551$ & $0.7871$ & $0.6590$ & $0.7160$\\
\texttt{0xff0a55d9} & \texttt{0x97b8cf} & $0.9069$ & $0.7951$ & $0.8156$ & $0.8033$ & $0.7973$ & $0.7122$\\
\hline
\end{tabular}

\medskip

\begin{tabular}{llrrrr}
\multicolumn{1}{c}{$a$} & \multicolumn{1}{c}{$e$} & \multicolumn{1}{c}{$f_5$} & \multicolumn{1}{c}{$f_6$} & \multicolumn{1}{c}{$f_7$} & \multicolumn{1}{c}{$f_8$}\\
\hline
\multicolumn{6}{c}{\gmwc[128/32]}\\
\hline
\texttt{0xff6b46e1} & \texttt{0x8e02fb} & $0.0442$ & $0.1894$ & $0.7826$ & $0.7635$\\
\texttt{0xff37e8ef} & \texttt{0x19aa97} & $0.0424$ & $0.1927$ & $0.7756$ & $0.7460$\\
\texttt{0xffbb32ec} & \texttt{0x19a4c3} & $0.0440$ & $0.1864$ & $0.7979$ & $0.7097$\\
\texttt{0xff4e0157} & \texttt{0x231523} & $0.0437$ & $0.1928$ & $0.7820$ & $0.6940$\\
\texttt{0xffb21299} & \texttt{0x40303} & $0.0425$ & $0.1856$ & $0.7740$ & $0.7749$\\
\hline
\end{tabular}
\end{table}

\subsection{A small-state generator}
\label{sec:small}

Dynamic languages often represent small integers in a machine word with a few
tag bits, so arithmetic is fast only if all intermediate results fit in, say,
$59$ or $60$ bits, and the multiplier must be much smaller than the base. The
\texttt{mwc59} generator of Erlang/OTP~\cite{OTPrand} uses $b=2^{32}$, $k=1$,
and $a=\texttt{0x7fa6502}$, so $m=a\cdot2^{32}-1$ is a $59$-bit safe prime; the
multiplier was chosen by figures of merit, and then by PractRand, among those in
$[2^{27}-2^{20}\..2^{27})$ for which $m$ is a safe prime. When $a$ is much
smaller than $b$, the shortest vector in dimension two is $\bm u$: for
\texttt{mwc59}, $f_2=0.164$, and $f_3=0.074$ is very close to the bound $0.075$
of Proposition~\ref{prop:twodim}, so successive triples lie in planes at
distance about $2^{-16}$, as for \mwc[64/32]. Indeed, the documentation of
\texttt{mwc59} states that its raw output has detectable problems in two- and
three-dimensional birthday-spacings and collision tests, and the functions
returning outputs apply a small xorshift-based scrambler~\cite{OTPrand}.

\section{Testing, speed, and recommendations}
\label{sec:evaluation}

\subsection{Statistical testing}
\label{sec:testing}

We tested the generators of Tables~\ref{tab:best-mwc}, \ref{tab:best-gmwc},
and~\ref{tab:best-32}, plain and with the scrambler X, using the first
multiplier (or pair) of each table, with
PractRand~\cite{DotPR} (version 0.96, expanded test set, extra folding), up to
$32$\,TB of output. With X, no failures were detected. Plain generators with
$b=2^{64}$ and plain \gmwc[128/32] pass the tests, too, whereas plain
\mwc[64/32] and \gmwc[64/32] fail after $8$\,GB of output, and plain
\mwc[128/32] after $4$\,TB.

PractRand does not detect the defects of plain \mwc[128], \mwc[192], and
\mwc[256], whereas the specialized birthday-spacings test of
Section~\ref{sec:detect} detects them using $3$ to $5$\,TB of output:
general-purpose test suites have little power against lattice structure at this
scale, which is precisely the reason why multipliers should be selected using
the spectral test. With $b=2^{32}$, instead, hyperplanes are much farther apart
(Table~\ref{tab:spacings}), and PractRand detects the lattice structure, too:
plain \mwc[128/32] fails, whereas plain \mwc[128], which has the same state
size, passes. Passing PractRand shows only that the generators have no gross
defect unrelated to their lattice structure.

\subsection{Speed}
\label{sec:speed}

Table~\ref{tab:speed} reports the time needed to generate a $64$-bit output
for our generators with $b=2^{64}$, plain and with X, and for a few other
popular generators. We include also \gmwc[192] ($k=2$), which was not part of our
search, using $a=\texttt{0xffaf1260e39d9b7f}$ and $e=\texttt{0xb4fc961f}$ (a
safe-prime modulus; the speed does not depend on the multipliers). As a
baseline, we include an MCG with prime modulus $2^{127}-1$, as
\mwc[128] simulates an MCG with a prime modulus of about the same size
(Theorem~\ref{teo:mcg}). Its multiplier has $64$ bits, and it is
larger than $\sqrt{2^{127}-1}$~\cite{StVCESGMCPNGjr}: each step needs two
$64$-bit multiplications and a reduction modulo $2^{127}-1$, which needs only
shifts and additions~\cite{PRBCLPNG}. AES-CTR encrypts a counter with AES-128 using the dedicated
AES-NI instructions of the processor: it encrypts four counter blocks at a time,
so that the rounds of different blocks are independent and can be executed in
parallel, and each group of blocks yields eight $64$-bit outputs. 

We used the harness of~\cite{BlVSLPNG}: each generator is
called $3\times10^9$ times in a loop that accumulates its outputs (by addition
for our generators, as X ends with a xor, and by xor for the others), with loop
unrolling disabled, and we report the minimum user time per output over five
runs.

\begin{table}
\caption{\label{tab:speed}Time in nanoseconds to generate a $64$-bit output,
plain and with the scrambler X (computed on the current state, except for X on
GMWC generators, which uses $x_n$ and $\tau_n$),
on an AMD EPYC~9R45 CPU (Zen~5) with frequency fixed at
$4.5$\,GHz, compiled by GCC~16.2.0 and Clang~23.1.3 with options \texttt{-O3
-march=native -fno-unroll-loops} (GMWC generators compiled by GCC also with
\texttt{-mno-avx512vl}).}
\renewcommand{\arraystretch}{1.1}
\small
\begin{tabular}{lrrrr}
 & \multicolumn{2}{c}{GCC} & \multicolumn{2}{c}{Clang}\\
\cmidrule(rl){2-3}\cmidrule(rl){4-5}
Generator & \multicolumn{1}{c}{Plain} & \multicolumn{1}{c}{X} & \multicolumn{1}{c}{Plain} & \multicolumn{1}{c}{X}\\
\hline
\mwc[128] & $0.93$ & $0.94$ & $0.90$ & $0.95$\\
\mwc[192] & $0.52$ & $0.53$ & $0.52$ & $0.51$\\
\mwc[256] & $0.52$ & $0.54$ & $0.50$ & $0.53$\\
\gmwc[128] & $2.08$ & $2.03$ & $2.05$ & $2.04$\\
\gmwc[192] & $2.05$ & $2.04$ & $2.04$ & $2.05$\\
\gmwc[256] & $2.06$ & $2.03$ & $2.05$ & $2.05$\\
\hline
MCG, $m=2^{127}-1$ & $2.19$ & \multicolumn{1}{c}{---} & $2.28$ & \multicolumn{1}{c}{---}\\
\xoroshiropp[128]~\cite{BlVSLPNG} & $0.66$ & \multicolumn{1}{c}{---} & $0.69$ & \multicolumn{1}{c}{---}\\
\xoshiropp[256]~\cite{BlVSLPNG} & $0.58$ & \multicolumn{1}{c}{---} & $0.58$ & \multicolumn{1}{c}{---}\\
RomuTrio~\cite{OvRFNPRNGPHQ} & $0.58$ & \multicolumn{1}{c}{---} & $0.59$ & \multicolumn{1}{c}{---}\\
SFC64~\cite{DotPR} & $0.49$ & \multicolumn{1}{c}{---} & $0.66$ & \multicolumn{1}{c}{---}\\
AES-CTR~\cite{SMDSPRN} & $0.68$ & \multicolumn{1}{c}{---} & $0.58$ & \multicolumn{1}{c}{---}\\
\end{tabular}
\end{table}

As with all benchmarks, the results depend on the hardware, on the compiler,
and on the way the outputs are used, but they can be explained by the
dependency chains of the code in Figure~\ref{fig:code}. Since the frequency was
fixed at $4.5$\,GHz, a nanosecond in Table~\ref{tab:speed} corresponds to $4.5$
clock cycles; on a modern superscalar processor, the latency of a full
$64\times64\to128$-bit multiplication is three or four cycles, and that of a
$128$-bit addition one or two cycles.

In \mwc[128], the multiplication by $a$ and the addition of the carry lie on the
loop-carried dependency chain, and indeed \mwc[128] needs about $4.2$ cycles per
output. For $k\geq2$, instead, $x_n$ depends on $x_{n-k}$, so the
multiplications of $k$ consecutive steps are independent and can be executed in
parallel, and the only dependency from one step to the next is the propagation
of the carry. Thus, in spite of their larger state, \mwc[192] and \mwc[256] are
almost twice as fast as \mwc[128]: they need about $2.3$ cycles per output,
close to the two cycles needed to propagate the carry (an addition followed by an
addition with carry), and they are the fastest generators in
Table~\ref{tab:speed}, except for SFC64 compiled by GCC, which, however, has a
larger state ($256$ bits) than \mwc[192], guarantees only a period of at least
$2^{64}$, and has no jump function.

In GMWC generators, the dependency chain of the carry contains also the
multiplication by $A$ that computes $x_n$ and the multiplication of $x_n$ by
$e$, which depend on each other and cannot be overlapped by increasing $k$: all
GMWC generators need about $9.2$ cycles per output, so they are about twice as
slow as \mwc[128], and four times as slow as \mwc[192] and \mwc[256]. With
AVX-512, GCC moves the $128$-bit intermediate values of the plain generators
through vector registers, which doubles their time; hence, the option
\texttt{-mno-avx512vl}.

In the MCG with modulus $2^{127}-1$ the two multiplications can be executed
in parallel, but the additions that combine their results and the reduction
lengthen the loop-carried dependency chain. The last step of the reduction, a
subtraction of $2^{127}-1$, is necessary with probability less than $2^{-63}$:
we implement it with a branch, which is never taken in practice, so it is not
part of the chain.\footnote{This implementation is due to Alexey Voskov, who uses
it in the generator \texttt{lcg127prime} of the SmokeRand test suite
(\url{https://github.com/alvoskov/SmokeRand}).} The MCG needs about $10$ cycles per output, more
than twice as much as \mwc[128], and slightly more than GMWC generators.

Finally, X adds a single instruction that is not on the loop-carried dependency
chain, and its cost is less than $7\%$; for MWC generators, computing it on the
new state instead of the current one, which shifts the output sequence by one
step, gives similar timings.

\subsection{GPUs}
\label{sec:gpu}

On GPUs, $64$-bit multiplications are emulated using $32$-bit ones, so
generators with $b=2^{32}$ become competitive. Table~\ref{tab:speed-gpu}
reports the time needed to generate a $32$-bit output, by the whole GPU and by
one of its multiprocessors, for \mwc[64X/32],
\mwc[128X/32], and \mwc[128X] (whose $64$-bit outputs count as two $32$-bit
outputs), and for SFC32~\cite{DotPR}, \xoshiropp[128]~\cite{BlVSLPNG}, and
three generators of NVIDIA's cuRAND
library:\footnote{\url{https://docs.nvidia.com/cuda/curand/}}
Philox4x32-10~\cite{SMDSPRN}, MRG32k3a~\cite{LEcGPICMRRNG}, and XORWOW, a
xorshift generator combined with a Weyl sequence~\cite{MarXR}, which is the
default generator of cuRAND.

Each thread copies the state of its generator into registers, accumulates the
outputs of $2^{24}$ calls as in the CPU harness, and writes the result to
global memory. The loop is unrolled $16$ times: GPUs issue instructions in
order, so the instructions controlling the loop, which a CPU executes in
parallel with the dependency chain of the generator, would cost as much as a
step of the fastest generators. We launch as many threads as the
multiprocessors can run at the same time, given the registers used by each
generator, as further threads would run in a second wave on partially occupied
multiprocessors, and we report the minimum time per output over five runs for
each block size in $\{128,256,512,1024\}$. For Philox, we use the function
\texttt{curand4()}, which returns four outputs at a time.

\begin{table}
\caption{\label{tab:speed-gpu}Time in picoseconds to generate a $32$-bit
output on an NVIDIA H100 SXM GPU ($132$ multiprocessors) with frequency fixed
at $1.83$\,GHz, compiled by \texttt{nvcc}~13.2 with options \texttt{-O3
-arch=sm\_90}. The first column is the time of the whole GPU, that is, the
inverse of its throughput, and the second one the time of a single
multiprocessor, the unit of a GPU analogous to a core of a CPU (the first column
multiplied by $132$).}
\renewcommand{\arraystretch}{1.1}
\small
\begin{tabular}{lrr}
Generator & \multicolumn{1}{c}{GPU} & \multicolumn{1}{c}{Multiprocessor}\\
\hline
\mwc[64X/32] & $0.23$ & $30$\\
\mwc[128X/32] & $0.24$ & $31$\\
\mwc[128X] & $0.38$ & $50$\\
\hline
SFC32~\cite{DotPR} & $0.30$ & $39$\\
\xoshiropp[128]~\cite{BlVSLPNG} & $0.44$ & $58$\\
XORWOW~\cite{MarXR} & $0.36$ & $47$\\
Philox4x32-10~\cite{SMDSPRN} & $0.81$ & $106$\\
MRG32k3a~\cite{LEcGPICMRRNG} & $1.77$ & $234$\\
\end{tabular}
\end{table}

The step of \mwc[64X/32] (Figure~\ref{fig:code32}) compiles to a single multiply--add instruction, which
computes $a x_{n-1}+c_{n-1}$ with a $64$-bit result, and a xor that computes X,
and \mwc[64X/32] is the fastest generator in Table~\ref{tab:speed-gpu}: more
than $50\%$ faster than XORWOW, which uses only additions, shifts, and xors, and
three and a half times as fast as Philox. MRG32k3a is the slowest: cuRAND
computes it using the same multiply--add instruction, but it needs several of
them per output and converts the result through a double. Unlike on CPUs, a
larger lag does not help, as the GPU hides the latency of a thread by running
other threads, and \mwc[128X/32] is slightly slower than \mwc[64X/32].
\mwc[128X], whose step needs a $64\times64\to128$-bit multiplication, is slower
per bit than \mwc[64X/32]. XORWOW, besides being slower, fails PractRand
after just $8$\,MB of output.

\subsection{Recommendations}
\label{sec:recommendations}

In all cases, we suggest using the scrambler X: it costs less than $7\%$ of the
time per output, and in our tests it removes from all bits every lattice defect
we could detect (Section~\ref{sec:x}). For $b=2^{32}$, X is
necessary: without it, \mwc[64/32], \mwc[128/32], and \gmwc[64/32] fail a
three-dimensional birthday-spacings test in less than a minute
(Section~\ref{sec:detect}).

Our suggested generator is \mwc[192X]: \mwc[192] and \mwc[256] are the fastest
generators in Table~\ref{tab:speed} with a known period and jump functions, and
\mwc[192] has the smaller state, with a period of about $2^{191}$. Its lacunary triples $(x_n,x_{n+1},x_{n+3})$ lie in planes at
distance about $2^{-32}$ (Corollary~\ref{cor:lacunary}), a defect that our
birthday-spacings test detects using about $4$\,TB of output
(Section~\ref{sec:detect}), and that X removes (Section~\ref{sec:x}); in our
opinion, however, at this state size the defect is irrelevant in practice, even without X.

GMWC generators have better lattices: \gmwc[128] shows no structural limitation in
dimensions up to eight, and a test on its worst projections needs at least $60$
times more points than a test on the successive triples of \mwc[128]
(Section~\ref{sec:detect}). However, in our tests X removes the lattice
defects of MWC generators at the cost of a single instruction, whereas GMWC
generators are about four times as slow as \mwc[192] (Section~\ref{sec:speed}):
there is thus no reason to use GMWC generators.

For $32$-bit outputs, we suggest \mwc[128X/32], which, like \mwc[192X], performs
independent multiplications, and has a period of about $2^{127}$. When registers
are scarce, as on GPUs, and a relatively short period is not an issue, we
suggest \mwc[64X/32], which is Thomas's MWC64X with our multiplier (Thomas's
multiplier is the choice of $M_8$, see Section~\ref{sec:comparison}), and is
the fastest generator in Section~\ref{sec:gpu}.

\section{Conclusions}

The form of the modulus of MWC generators reduces the figure of merit in
dimension $k+2$ to the shortest vector of a two-dimensional lattice, and bounds
it explicitly. The bound is due to dual vectors with small support, so for every
lag the vectors $(x_n,x_{n+1},x_{n+k+1})/b$ lie in planes at distance about
$b^{-1/2}$. GMWC generators with lag one avoid this limitation, at least
empirically, and those with larger lags push it to a distance of about
$b^{-2/3}$, at the price of two additional dependent multiplications. The
scrambler X removes, in our tests, the defect and every other form of lattice
structure we could detect, at the cost of a single instruction. Generators with
several nonzero coefficients can avoid the structural
limitation~\cite{CoLDPMCRNG}, but at the cost of additional multiplications, to
improve a structure that X scrambles anyway.

Since the
degenerate behavior of the figures of merit makes classical criteria unsuitable,
we select multipliers by a weighted geometric mean, which is essentially forced
by requiring that the ranking depend neither on the normalization of the figures
of merit nor on the other candidates; as every criterion based on figures of
merit, however, it says nothing about the projections it does not consider. Our
suggested generator is \mwc[192X], or \mwc[128X/32] when $32$-bit outputs are
needed (\mwc[64X/32] when registers are scarce, as on GPUs).

Several problems remain open. A theoretical analysis of X would make it possible
to compare it with GMWC generators on the same ground. It would be interesting
to show that the families of dual vectors of Proposition~\ref{prop:general} are
the only ones yielding nontrivial bounds, which would also explain the apparent
absence of a structural limitation on $f_3$ for GMWC generators with lag one; to
quantify the amount of output that a birthday-spacings test needs to detect a
given distance between hyperplanes, extending the rules of thumb of L'Ecuyer and
Simard~\cite{LESPBSTCFRNG}; and, for generators used without a scrambler, to
extend our analysis to MWC generators with several nonzero
coefficients~\cite{MarYAR,CoLDPMCRNG}.

\begin{acks}
The author would like to thank Mark Goresky for a long and very interesting
exchange on generalized multiply-with-carry generators, which included the
confirmation of the error in the formula for the carry and of the erroneous
parameter sets discussed in Sections~\ref{sec:mwc} and~\ref{sec:erratum}, and
a discussion on seeding; Pierre L'Ecuyer for useful pointers to the literature on
carry generators; David Blackman for several discussions on MWC generators; and
Raimo Niskanen for the collaboration on the Erlang/OTP \texttt{mwc59} generator.
\end{acks}

\bibliography{biblio,mwc-extra}

\end{document}